\newcommand{\eps}{\varepsilon}
\newcommand{\R}{\mathbb{R}}
\newcommand{\C}{\mathbb{C}}
\newcommand{\N}{\mathbb{N}}
\newcommand{\D}{\partial}
\renewcommand{\Im}{\text{Im }}
\renewcommand{\Re}{\text{Re }}
\newtheorem{thm}{Theorem}[section]
\newtheorem{prop}{Proposition}[section]
\numberwithin{equation}{section}
\author{Matthew Masarik\\
University of Michigan\\
\texttt{masarikm@umich.edu}}
\title{The Wave Equation in a General Spherically Symmetric Particlelike Geometry}
\date{\today}
\begin{document}
\maketitle
\begin{abstract}We consider the Cauchy problem with smooth and compactly supported initial data for the wave equation in a general class of spherically symmetric geometries which are globally smooth and asymptotically flat. Under certain mild conditions on the far-field decay, we show that there is a unique globally smooth solution which is compactly supported for all times and \emph{decays in $L^{\infty}_{\text{loc}}$ as $t$ tends to infinity}. Because particlelike geometries are singularity free, they impose additional difficulties at the origin. Thus this study requires ideas and techniques not present in the study of wave equations in black hole geometries. We obtain as a corollary that solutions to the wave equation in the geometry of particle-like solutions of the SU(2) Einstein/Yang-Mills equations decay as $t\to \infty$.\end{abstract}
\section{Introduction}Recently there has been much interest in obtaining decay results for the wave equation in various \emph{black hole} geometries. In the case of the Schwarzschild metric, Kronthaler showed in \cite{thecauchyproblemforthewaveequationintheschwarzschildgeometry} that there exists a unique global solution to this problem and, moreover, the solution decays pointwise as $t \to \infty$. In \cite{onpointwisedecayoflinearwavesonaSchwarzschildblackholebackground} Donninger, et al. obtain the specific decay \emph{rate} $t^{-3}$ for solutions; and in \cite{decayratesforsphericalscalarwavesintheSchwarzschildgeometry} Kronthaler obtains the same rate under the assumption that the data is spherically symmetric (i.e. for the first angular mode of the full solution), along with the additional result that if the data is momentarily static (i.e., $\left.\D_t\phi\right|_{(x,0)} = 0$), then the decay rate can be improved to $t^{-4}$ (again for the first angular mode). For the Kerr metric (without a smallness restriction to the angular momentum), Finster et al. showed decay of solutions of the wave equation in \cite{decayofsolutionsofthewaveequationinthekerrgeometry}. If the case of sufficiently small angular momentum, Dafermos and Rodnianski were able to demonstrate the uniform boundedness of solutions to the wave equation in \cite{aproofoftheuniformboundednessofsolutionstothewaveequationsonslowlyrotatingkerrbackgrounds}, and Andersson and Blue obtained decay \emph{rates} in \cite{hiddensymmetriesanddecayforthewaveequationonthekerrspacetime}. However, in this paper we intend to study \emph{particle-like} geometries (i.e. non-singular and asymptotically flat). This is, therefore, an entirely novel problem.

We consider a 4-dimensional Riemannian manifold $\mathscr{M}$ with metric $g$: $(\mathscr{M},g)$ where the metric $g$ is given by
\begin{equation}ds^2 = g_{ij}dx^idx^j = -T^{-2}(r)dt^2 + K^2(r)dr^2 + r^2(d\theta^2 + \sin^2\theta d\phi^2),\label{metric}\end{equation}
and where $r\geq 0$, $0\leq \theta\leq \pi$, and $0\leq \phi \leq 2\pi$. We assume that the metric coefficients are globally smooth: $T,K \in C^{\infty}[0,\infty)$; we also assume that the metric is not degenerate: $T,K > 0$ (note that since we shall assume $T,K \to 1$ as $r \to \infty$, this implies $T$ and $K$ are bounded away from zero). We further assume
\begin{equation}K(0) = 1,\label{firstcondn}\end{equation}
\begin{equation}T'(0) = K'(0) = 0,\end{equation}
\begin{equation}T(r) \sim 1 + O\left(\frac{1}{r}\right) \text{ and } K(r) \sim 1 + O\left(\frac{1}{r}\right) \text{ as } r\to \infty,\label{TKto1}\end{equation}
and finally
\begin{equation}\frac{T'(r)}{T(r)} + \frac{K'(r)}{K(r)} \sim O\left(\frac{1}{r^2}\right) \text{ as } r \to \infty.\label{TKasymptotics}\end{equation}
In other words, we are assuming that near the origin, the $t =$ const. hyperplanes are similar to the Euclidean space $\R^3$ up to order $r^2$, and in the far-field limit $\mathscr{M}$ is the Minkowski space $\R^{1+3}$ up to order $r^{-1}$. These assumptions are essential in describing what we consider a particlelike geometry. The assumption $\eqref{TKasymptotics}$ is equivalent to assuming that $\frac{d}{dr}\log(TK) = O\left(\frac{1}{r^2}\right)$ for large $r$, so we assume control on the rate at which the log of $TK$ tends to 0. These assumptions are satisfied for the important examples of particle-like geometries (e.g., Minkowski, particle-like solutions of Einstein/Yang-Mills (EYM) with gauge group SU(2), c.f. \cite{existenceofinifinitelymanysmoothstaticglobalsolutionsoftheeinsteinyangmillsequations}).

We propose to study the Cauchy problem for the wave equation in this geometry. However, there is a boundary at $r=0$ and we must impose a boundary condition there. When considering black hole solutions, one merely requires that the data be compactly supported away from the horizon. Then, one can show that the solution never reaches the boundary, so that the natural boundary conditions are that the solution is zero at the horizon and at infinity. In the particle-like case, we must take a different approach, however, since there is no reason why the solution of the wave equation in a particle-like geometry should be always supported away from the origin. So we must determine the proper (i.e. physical) boundary condition at the origin. To do this, we recast this as a problem in Cartesian coordinates. Making this change of coordinates, the metric (in coordinates $(t,x,y,z)$) becomes 
\begin{equation}ds^2 = g_{ij}dx^idx^j,\end{equation}
where the nonzero metric coefficients are given by
\begin{align}&g_{11} = -T^{-2}(r)\notag\\
&g_{22} = \frac{x^2K^2(r) + y^2 + z^2}{r^2}\notag\\
&g_{24} = g_{42} = \frac{xz(K^2(r) - 1)}{r^2}\notag\\
&g_{23} = g_{32} = \frac{xy(K^2(r) - 1)}{r^2}\\
&g_{33} = \frac{x^2 + y^2K^2(r) + z^2}{r^2}\notag\\
&g_{34} = g_{43} = \frac{yz(K^2(r) - 1)}{r^2}\notag\\
&g_{44} = \frac{x^2 + y^2 + z^2K^2(r)}{r^2},\notag\end{align}
and $r = \sqrt{x^2 + y^2 + z^2}$. Note that these coefficients are globally smooth (the conditions $K(0) = 1, K'(0) = 0$ guaranteeing smoothness at the origin). This is obvious for each term except the diagonal $g_{ii}$ terms. Consider, for example, $g_{22}$. We can write
\[g_{22} = \frac{1}{r^2 }\left(\frac{x^2}{K^2} + y^2 + z^2\right) = 1 + \frac{x^2}{r^2K^2}\left(1 - K^2\right),\]
from which we see that $g_{22}$ is globally smooth. Similar arguments demonstrate the smoothness of the other diagonal terms. The wave equation in this geometry is given by 
\begin{equation}0= g^{ij}\nabla_i \nabla_j \zeta = \frac{1}{\sqrt{-g}}\frac{\D}{\D x^i}\left(\sqrt{-g}g^{ij}\frac{\D}{\D x^j}\right)\zeta =:\square \zeta.\label{waveequation}\end{equation}
Since this is a Lorentzian metric, the Laplacian will be a hyperbolic operator and we therefore expect finite speed of propagation. This coupled with compactly supported initial data suggests the asymptotic boundary condition $\zeta(t,x,y,z) \to 0$ as $r \to \infty$. We therefore study the Cauchy problem\footnote{We use the compact form $(\psi,i\psi_t)$ for the data in what follows, since this is most convenient when we reformulate this as a Hamiltonian problem later.}
\begin{equation}\begin{cases}\square \zeta(t,x,y,z) = 0, (x,y,z) \in \R^3, t >0 \\ (\zeta, i\zeta_t)(0,x,y,z) = Z_0(x,y,z) \in C_0^{\infty}(\R^3)^2.\end{cases}\label{cp1}\end{equation}
(We omit the asymptotic boundary condition at infinity since we will show that it is necessarily satisfied by the solution of $\eqref{cp1}$.) Next we write out explicitly $\square \zeta = 0$ in Cartesian coordinates:
\begin{equation} \zeta_{tt} = \sum_{i,j = 1}^3 a_{ik} \zeta_{x_ix_j} + \sum_{i=1}^3b_i\zeta_{x_i},\label{waveequation2}\end{equation}
where the coefficients are given by
\begin{align}&a_{11} = \frac{1}{r^2 T^2}\left(\frac{x^2}{K^2} + y^2 + z^2\right),\notag \\
&a_{22} =  \frac{1}{r^2 T^2}\left(x^2 + \frac{y^2}{K^2} + z^2\right),\notag \\
&a_{33} =  \frac{1}{r^2 T^2}\left(x^2 + y^2 + \frac{z^2}{K^2} \right),\notag \\
&a_{12} = a_{21} = \frac{(1-K^2)xy}{r^2T^2K^2}, \\
&a_{13} = a_{31} = \frac{(1-K^2)xz}{r^2T^2K^2},\notag \\
&a_{23} = a_{32} = \frac{(1-K^2)yz}{r^2T^2K^2}, \notag \\
&b_i = \left[ \frac{2}{r^2K^2}(1 - K^2) - \frac{1}{rK^2}\left(\frac{T'}{T} + \frac{K'}{K}\right)\right].\notag\end{align}
We will frequently suppress the arguments of functions to ease notation. We can show as before that these coefficients are globally smooth. If we now let $v = (\zeta_x,\zeta_y,\zeta_z,\zeta_t)^T$, then we can write equation $\eqref{waveequation2}$ as 
\begin{equation}A\D_t v - A_1\D_x v - A_2 \D_y v - A_3 \D_z v - Bv = 0,\label{hyperbolicsystem}\end{equation}
where 
\[A = \left(\begin{array}{cccc} a_{11} & a_{12} & a_{13} & 0\\ a_{21} & a_{22} & a_{23} & 0 \\ a_{31} & a_{32} & a_{33} & 0 \\  0 & 0 & 0 & 1\end{array}\right),\,\,
A_i = \left( \begin{array}{cccc} 0 & 0 & 0 & a_{i1} \\ 0 & 0 & 0 & a_{i2} \\ 0 & 0 & 0 & a_{i3} \\ a_{i1} & a_{i2} & a_{i3} & 0\end{array}\right),\,\,
\text{ and }
B  = \left( \begin{array}{cccc} 0 & 0 & 0 & 0 \\ 0 & 0 & 0 & 0 \\ 0 & 0 & 0 & 0 \\ b_1 & b_2& b_3& 0\end{array}\right).\]
Then, since the eigenvalues of $A$ are $1, T^{-2}, T^{-2},$ and $K^{-2}$ and these are all bounded away from zero, $A$ is uniformly positive definite and the system in $\eqref{hyperbolicsystem}$ is therefore a symmetric hyperbolic system (in the sense of section 5.3 in \cite{fritz}). Accordingly, there exists a unique, global, smooth solution that propagates with finite speed. Coupling this with the initial data yields a solution $\zeta$ of $\eqref{cp1}$ that is unique, smooth, globally defined, and compactly supported for each $t$.

We now wish to use this solution to understand the wave equation in the coordinates $(t,r,\theta,\phi)$. In particular, it is necessary to obtain a natural boundary condition to impose at $r = 0$. To this end, let us recall that $\frac{\D \zeta}{\D r} = \nabla \zeta \cdot \frac{(x,y,z)}{r}$, so that for any $\eps > 0$
\begin{align*}\int_{\D B(0,\eps)} \frac{\D \zeta}{\D r} dS &= \int_{B(0,\eps)}\Delta \zeta dV,\end{align*}
by the divergence theorem. Thus (using the notation that $\fint_{\Omega} f dx = \frac{1}{|\Omega|}\int_{\Omega} f dx$),
\begin{equation}\fint_{\D B(0,\eps)} \frac{\D \zeta}{\D r} dS = \frac{\eps}{3}\fint_{B(0,\eps)}\Delta \zeta dV.\label{bcint}\end{equation}
Since $\zeta$ is smooth, the integral on the right remains uniformly bounded as $\eps \searrow 0$, and thus $\eqref{bcint}$ yields 
\begin{equation*}0 = \lim_{\eps\searrow 0} \fint \frac{\D \zeta}{\D r} dS = \left.\frac{\D \zeta}{\D r}\right|_{r = 0}.\end{equation*}
Thus we obtain the boundary condition at $r = 0$:
\begin{equation}\left.\frac{\D \zeta}{\D r}\right|_{r = 0} = 0.\end{equation}
$\left.\frac{\D \zeta}{\D r}\right|_{r=0} = 0$. Now the wave equation in the coordinates $(t,r,\theta,\phi)$ reads (we now consider $\zeta = \zeta(t,r\,\theta,\phi)$)
\begin{equation}-T^2\zeta_{tt} + \frac{1}{K^2}\zeta_{rr} + \left(\frac{2}{K^2r} - \frac{1}{K^2}\left(\frac{T'}{T} + \frac{K'}{K}\right)\right)\zeta_r + \frac{\Delta_{S^2}}{r^2}\zeta = 0.\label{waveequation3}\end{equation}
We are therefore interested in solving the Cauchy problem
\begin{equation}\begin{cases} -T^2\zeta_{tt} + \frac{1}{K^2}\zeta_{rr} + \left(\frac{2}{K^2r} - \frac{1}{K^2}\left(\frac{T'}{T} + \frac{K'}{K}\right)\right)\zeta_r + \frac{\Delta_{S^2}}{r^2}\zeta = 0 \text{ on } \R \times (0,\infty)\times S^2 \\ \left.\frac{\D \zeta}{\D r}\right|_{r=0} = 0 \\ (\zeta,i\zeta_t)(0,r,\theta,\phi) = Z_0(r,\theta,\phi) \in\mathscr{A}^2,\end{cases}\label{cp2}\end{equation}
where
\begin{equation}\mathscr{A}:= \left\{ \psi \in C^{\infty}([0,\infty)\times S^2) : \left.\psi_r\right|_{r = 0} = 0 \text{ and there exists } R>0 \text{ so that } \psi(r,\theta,\phi) \equiv 0
\text{ for } r> R\right\}.\label{defofA}\end{equation}

\begin{thm}The Cauchy problem $\eqref{cp2}$ has a globally smooth, unique solution $\zeta$. Moreover,  $\zeta \in \mathscr{A}$ for each time $t$.\end{thm}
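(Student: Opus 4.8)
The plan is to deduce the theorem from the well-posedness of the Cartesian symmetric hyperbolic system \eqref{hyperbolicsystem} already established, with essentially all of the real work concentrated at the origin. The guiding observation is that \eqref{waveequation3} is simply \eqref{waveequation2} re-expressed in $(t,r,\theta,\phi)$, so on the open region $r>0$ a function solves one exactly when it solves the other. Consequently, for initial data that happens to extend to a smooth, compactly supported function on $\R^3$ with vanishing gradient at the origin, existence is immediate: I would take the unique global smooth solution of \eqref{cp1} supplied by the symmetric hyperbolic theory and re-express it in $(t,r,\theta,\phi)$. The boundary condition $\left.\D\zeta/\D r\right|_{r=0}=0$ then holds automatically, by the same divergence-theorem computation that produced \eqref{bcint}, while finite speed of propagation against compactly supported data furnishes, for each fixed $t$, a radius $R$ beyond which $\zeta$ vanishes. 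Together these place $\zeta(t,\cdot)$ in $\mathscr{A}$.

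The genuine difficulty is that a typical element of $\mathscr{A}$ does \emph{not} extend smoothly to $\R^3$: smoothness in $(r,\theta,\phi)$ together with $\left.\psi_r\right|_{r=0}=0$ is strictly weaker than Cartesian smoothness --- for instance $r^2\cos\theta$ (cut off at large $r$) lies in $\mathscr{A}$ yet agrees near the origin with $rz$, which is not smooth there. To treat all admissible data I would expand in spherical harmonics, $\zeta=\sum_{\ell,m}u_{\ell m}(t,r)Y_\ell^m(\theta,\phi)$, so that each coefficient solves the radial equation obtained from \eqref{waveequation3} by the replacement $\Delta_{S^2}\mapsto-\ell(\ell+1)$,
\begin{equation*}
-T^2\partial_t^2 u_{\ell m}+\frac{1}{K^2}\partial_r^2 u_{\ell m}+\left(\frac{2}{K^2r}-\frac{1}{K^2}\left(\frac{T'}{T}+\frac{K'}{K}\right)\right)\partial_r u_{\ell m}-\frac{\ell(\ell+1)}{r^2}u_{\ell m}=0.
\end{equation*}
Each is a $1+1$-dimensional wave equation carrying the singular potential $\ell(\ell+1)/r^2$; the indicial analysis at $r=0$ (where $K(0)=1$) yields the two local behaviors $r^\ell$ and $r^{-\ell-1}$, and regularity of the full $\zeta$ selects the branch $u_{\ell m}\sim r^\ell$. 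I would solve each mode with this behavior and compactly supported data, then reassemble, using that the harmonic coefficients of the smooth, compactly supported data decay faster than any power of $\ell$ to obtain a series converging in $C^\infty$ to a solution lying in $\mathscr{A}$.

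Uniqueness I would obtain from a conserved energy adapted to the metric. Multiplying \eqref{waveequation3} by $\overline{\zeta_t}$ and integrating over space against the weight $r^2K/T$ yields, after integration by parts, conservation of
\begin{equation*}
E(t)=\int_{S^2}\int_0^{\infty}\frac{r^2K}{T}\left(T^2|\zeta_t|^2+\frac{1}{K^2}|\zeta_r|^2+\frac{|\nabla_{S^2}\zeta|^2}{r^2}\right)dr\,d\omega,
\end{equation*}
where the boundary contribution at infinity vanishes by compact support and the one at $r=0$ is killed by the $r^2$ factor in the weight together with the regularity built into $\mathscr{A}$. Applying this to the difference of two solutions with identical data forces that difference to vanish; the same identity at higher differential order underlies both the smoothness assertion and the convergence of the series above.

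The step I expect to be the true obstacle is precisely the behavior at $r=0$: confirming that the single condition $\left.\zeta_r\right|_{r=0}=0$ is both the physically correct boundary condition and exactly what is needed to single out a smooth solution, and controlling the singular potential $\ell(\ell+1)/r^2$ well enough, uniformly in $\ell$, that the reassembled series is genuinely smooth at the origin. This is the feature with no counterpart in the black-hole setting, where compactly supported data never reaches the horizon and no interior regularity analysis is required.
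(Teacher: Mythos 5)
Your first paragraph is, in substance, the paper's entire proof: the paper takes the unique global smooth, compactly supported solution of \eqref{cp1} supplied by the symmetric hyperbolic theory, rewrites it in $(t,r,\theta,\phi)$, and relies on the divergence-theorem computation \eqref{bcint} together with finite propagation speed to place it in $\mathscr{A}$ for each $t$ --- nothing more is said there. Where you genuinely depart from the paper is in taking the definition \eqref{defofA} at face value and observing, correctly, that $\mathscr{A}$ contains functions that are not restrictions of elements of $C_0^{\infty}(\R^3)$ (your example $r^2\cos\theta = rz$ is valid: it is $C^1$ but not $C^2$ at the origin in Cartesian coordinates), so the Cartesian theory does not literally cover all data in $\mathscr{A}^2$. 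This is a real imprecision in the paper, which never confronts it because downstream it only ever feeds into \eqref{cp2} data and solutions originating from \eqref{cp1}. Your proposed repair --- spherical-harmonic decomposition, mode-by-mode solution selecting the regular indicial branch $r^{\ell}$, resummation via rapid decay of the harmonic coefficients, and uniqueness from the conserved energy with weight $r^2K/T$ (which is the correct weight and coincides with the paper's energy \eqref{energy} after changing variables from $u$ back to $r$) --- is more general and buys the theorem for all of $\mathscr{A}^2$ as literally defined, but, as you yourself flag, it is only a sketch at the one genuinely hard point: the uniform-in-$\ell$ control of the singular potential $\ell(\ell+1)/r^2$ needed to show the reassembled series is $C^{\infty}$ up to $r=0$. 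Two smaller caveats: the computation \eqref{bcint} controls only the angular average of $\zeta_r$ over $\D B(0,\eps)$, so by itself it yields $\lim_{\eps\searrow 0}\fint_{\D B(0,\eps)}\zeta_r\,dS = 0$ rather than the pointwise-in-angle condition $\left.\zeta_r\right|_{r=0}=0$ (the paper makes the same leap); and for $\ell = 1$ the regular branch $r^{\ell} = r$ has nonvanishing $r$-derivative at the origin, so the pointwise reading of the boundary condition in \eqref{defofA} interacts nontrivially with your indicial analysis. Neither caveat affects the case the paper actually uses, and your argument restricted to data extending smoothly to $\R^3$ is correct and identical to the paper's.
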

\begin{proof}This follows at once from the above result when we change to spherical coordinates and consider $\zeta = \zeta(t,r,\theta,\phi)$.\end{proof}

Let us now define the coordinate $u = u(r)$ by
\begin{equation}u(r) = -\int_{r}^{\infty} \frac{K(r')T(r')}{(r')^2}dr',\label{ucoordinate}\end{equation}
which maps the interval $(0,\infty)$ to $(-\infty,0)$. We record some asymptotics of $u$ which will be useful later. For large $r$, we have $u\nearrow 0$ according to
\begin{equation}u(r) = -\frac{1}{r} + O\left(\frac{1}{r^2}\right), \text{ or } \frac{1}{u} = -r + O(1).\label{smalluasymptotics}\end{equation}
For $r$ small we have $u \to -\infty$ according to 
\begin{equation}u(r) = -\frac{1}{r}+ O(1)\text{ or }\frac{1}{u} = -r + O(r^2).\label{smallrasymptotics}\end{equation}
If we let $\psi(t,u,\theta,\phi) = \zeta(t,r(u),\theta,\phi)$, then $\psi$ satisfies
\begin{equation}\left(-r^4\D_t^2 + \D_u^2 + \frac{r^2}{T^2}\Delta_{S^2}\right)\psi = 0\text{ on } \R \times (-\infty, 0) \times S^2.\label{waveequationinu}\end{equation}
Furthermore, $\psi$ is the unique, global, smooth solution of the Cauchy problem\footnote{The condition on $\psi$ as $u \to -\infty$ is precisely the condition $\left.\zeta_r\right|_{r = 0} = 0$ after changing variables and employing the asymptotics $\eqref{smalluasymptotics}$ and $\eqref{smallrasymptotics}$.} 
\begin{equation}\begin{cases}\left(-r^4\D_t^2 + \D_u^2 + \frac{r^2}{T^2}\Delta_{S^2}\right)\psi = 0 \text{ on } \R \times (-\infty,0) \times S^2\\ \psi_u = O\left(\frac{1}{u^3}\right) \text{ as } u\to -\infty \\(\psi,i\psi_t)(0,r,\theta,\phi) = \Psi_0(r,\theta,\phi) \in \mathscr{B}^2,\end{cases}\label{cp3}\end{equation}
where $\psi \in \mathscr{B}$ if and only if $\psi \in C^{\infty}((-\infty,0)\times S^2)$ and 
\begin{enumerate}[(i)]
\item there exists $u_0 < 0$ so that $\psi(u,\theta,\phi) \equiv 0$ for all $u>u_0$;
\item $\psi_u = O\left(\frac{1}{u^3}\right) \text{ as } u \to -\infty$; and 
\item $\psi$ and all its derivatives have finite limits as $u \to -\infty$.
\end{enumerate}

Observe that $\mathscr{B}$ and $\mathscr{A}$ are related to each other in the sense that, given $\zeta(r,\theta,\phi) \in \mathscr{A}$, $\psi(u,\theta,\phi) := \zeta(r(u),\theta,\phi) \in \mathscr{B}$. To see this, note that $\zeta_r = O(r)$ for small $r$, and thus we have
\begin{equation*}\psi_u = \zeta_r \frac{dr}{du} = \zeta_r \frac{r^2}{KT} = O(r^3)\end{equation*}
for small $r$. Owing to the asymptotics in $\eqref{smallrasymptotics}$ for small $r$, it follows that $\psi_u = O\left(\frac{1}{u^3}\right)$ as $u \to -\infty$. Recalling also that $\zeta$ is smooth up to the origin, it follows that $\psi$ and all the derivatives of $\psi$ have finite limits as $u \to -\infty$.

We also note that $\psi \in \mathscr{B}$ for all times $t$. This follows from the above observations and the fact that $\zeta \in \mathscr{A}$. Thus the energy 
\begin{equation}E(\psi) := \int_0^{2\pi}\int_{-1}^1\int_{-\infty}^0 r^4 (\psi_t)^2 + (\psi_u)^2 + \frac{r^2}{T^2}\left(\frac{1}{\sin^2\theta}(\D_{\phi}\psi)^2 + \sin^2\theta (\D_{\cos\theta}\psi)^2\right)dud(\cos\theta)d\phi\label{energy}\end{equation}
is well-defined. Moreover, the summability guarantees that we may compute $\frac{d}{dt}E(\psi)$ by differentiating under the integral.\footnote{We can do this since according to the asymptotics $\eqref{smalluasymptotics}$ and $\eqref{smallrasymptotics}$ the coefficients on the first and third terms in the integrand decay at least as fast as $\frac{1}{u^2}$ as $u\to -\infty$, and the $\psi_u$ term decays as $\frac{1}{u^3}$. Then, using that all the derivatives of $\psi$ have finite limits as $u\to-\infty$, we can apply Lebesgue's dominated convergence theorem to differentiate under the integral.} Integrating by parts and using the asymptotics $\eqref{smallrasymptotics}$ to account for the boundary terms yields that $\frac{d}{dt}E(\psi) = 0$; i.e. the energy is conserved.

We next let $\Psi = (\psi,i\psi_t)^T$ and recast $\eqref{cp3}$ as a Hamiltonian system; i.e. $\Psi$ is the unique global solution in $\mathscr{B}^2$ for all times $t$ of the Cauchy problem
\begin{equation}\begin{cases}i\D_t \Psi = H\Psi \text{ on } \R \times (-\infty,0) \times S^2 \\\Psi(0,u,\theta,\phi) = \Psi_0(u,\theta,\phi) \in \mathscr{B}^2,\label{cp4}\end{cases}\end{equation}
where the Hamiltonian $H$ is given by 
\begin{equation}H = \left(\begin{array}{cc} 0 & 1 \\ A & 0 \end{array}\right) \text{ and } A = -\frac{1}{r^4}\D_u^2 - \frac{\Delta_{S^2}}{r^2T^2}.\label{hamiltonian}\end{equation}

We can also see that the energy functional induces an inner product on $\mathscr{B}^2$. For $\Psi, \Gamma \in \mathscr{B}^2$, the inner product $\langle \Psi,\Gamma\rangle$ is given by
\begin{equation}\int_0^{2\pi}\int_{-1}^1 \int_{-\infty}^0 r^4 \psi_2\overline{\gamma_2} + (\D_u\psi_1)\overline{(\D_u\gamma_1)} + \frac{r^2}{T^2}\left(\frac{1}{\sin^2\theta}(\D_{\phi}\psi_1)\overline{(\D_{\phi}\gamma_1)} + \sin^2\theta (\D_{\cos\theta}\psi_1)\overline{(\D_{\cos\theta}\gamma_1)}dud(\cos\theta)\right)d\phi.\label{innerproduct}\end{equation}

\begin{prop}$H$ is symmetric with respect to $\langle \cdot,\cdot \rangle$ on $\mathscr{B}^2$.\end{prop}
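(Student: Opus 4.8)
The plan is to verify directly that $\langle H\Psi,\Gamma\rangle = \langle\Psi,H\Gamma\rangle$ for all $\Psi = (\psi_1,\psi_2)^T,\ \Gamma = (\gamma_1,\gamma_2)^T \in \mathscr{B}^2$, by transferring the operator $A$ from one factor onto the other through integration by parts. Since $\eqref{hamiltonian}$ gives $H\Psi = (\psi_2, A\psi_1)^T$, substituting into the definition $\eqref{innerproduct}$ of the inner product makes $\psi_2$ play the role of the ``first component'' and $A\psi_1$ the role of the ``second component.'' Comparing $\langle H\Psi,\Gamma\rangle$ and $\langle\Psi,H\Gamma\rangle$ term by term, I expect the purely gradient terms to be interchanged between the two expressions and the only terms requiring real work to be the $r^4$-weighted terms involving $A$; the matching gradient terms will be produced precisely by integrating these by parts.

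The key algebraic observation is that the weight $r^4$ exactly cancels the singular prefactors in $A$: from $\eqref{hamiltonian}$ we have $r^4 A\psi_1 = -\partial_u^2\psi_1 - \frac{r^2}{T^2}\Delta_{S^2}\psi_1$, where $r = r(u)$ depends on $u$ alone. First I would integrate the $\partial_u^2$ piece by parts once in $u$, converting the corresponding part of $\int r^4(A\psi_1)\overline{\gamma_2}$ into $\int(\partial_u\psi_1)\overline{(\partial_u\gamma_2)}$, which is exactly the $u$-gradient term of $\langle\Psi,H\Gamma\rangle$. Next I would integrate the $\Delta_{S^2}$ piece by parts on the sphere; writing $\Delta_{S^2} = \partial_{\cos\theta}(\sin^2\theta\,\partial_{\cos\theta}) + \frac{1}{\sin^2\theta}\partial_\phi^2$ and noting that $\frac{r^2}{T^2}$ is constant on each sphere, this produces precisely the angular gradient terms of $\langle\Psi,H\Gamma\rangle$, with no boundary contribution since $S^2$ is closed (the coordinate singularities at the poles are harmless because elements of $\mathscr{B}$ are smooth). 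The identical manipulation applied to $\int r^4\psi_2\,\overline{(A\gamma_1)}$ reproduces the remaining $u$- and angular-gradient terms of $\langle H\Psi,\Gamma\rangle$, and matching the pieces yields the desired equality.

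The main point requiring care is the vanishing of the boundary terms from the $u$-integration by parts, and this is exactly where the definition of $\mathscr{B}$ enters. At $u = 0$ (i.e. $r = \infty$) the boundary term vanishes because every element of $\mathscr{B}$ is compactly supported away from $u = 0$ by property (i), so $\gamma_2$ and $\psi_2$ vanish there. At $u \to -\infty$ (i.e. $r = 0$) I would invoke property (ii), namely $\partial_u\psi_1 = O(1/u^3)\to 0$ and likewise $\partial_u\gamma_1\to 0$, together with property (iii), which guarantees that $\psi_2$ and $\gamma_2$ have finite limits; hence the products $\partial_u\psi_1\,\overline{\gamma_2}$ and $\psi_2\,\overline{\partial_u\gamma_1}$ tend to zero and the boundary terms drop out. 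Once this asymptotic control at the origin is in place, the rest is routine bookkeeping of which gradient term matches which, so I expect the boundary analysis as $u\to-\infty$ to be the only genuine obstacle.
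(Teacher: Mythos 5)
Your argument is correct, but it is a genuinely different route from the one in the paper. The paper does not integrate by parts inside this proof at all: it takes an arbitrary $\Psi_0\in\mathscr{B}^2$, evolves it via the Cauchy problem $\eqref{cp4}$, invokes the previously established conservation of $E(\Psi)=\langle\Psi,\Psi\rangle$, and differentiates in $t$ to obtain $0=-i\langle H\Psi,\Psi\rangle+i\langle\Psi,H\Psi\rangle$; evaluating at $t=0$ gives $\langle H\Psi_0,\Psi_0\rangle=\langle\Psi_0,H\Psi_0\rangle\in\R$ for every $\Psi_0$, and a polarization argument then yields the full bilinear identity. Your proof instead establishes $\langle H\Psi,\Gamma\rangle=\langle\Psi,H\Gamma\rangle$ directly, by using $r^4A\psi_1=-\D_u^2\psi_1-\frac{r^2}{T^2}\Delta_{S^2}\psi_1$ and integrating by parts in $u$ and over $S^2$, with the boundary terms at $u=0$ killed by property (i) of $\mathscr{B}$ and those at $u\to-\infty$ killed by the pairing of properties (ii) and (iii). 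In substance the two arguments rest on the same computation, since the paper's energy conservation was itself obtained by exactly this integration by parts; what yours buys is self-containedness (no appeal to the existence theory for $\eqref{cp4}$, to energy conservation, or to polarization) and the bilinear statement in one step, while the paper's version is shorter given what has already been proved and avoids redoing the boundary analysis. One small point to watch in your version: properties (i)--(iii) control $\D_u\psi_1$ and the limits of derivatives but do not literally give absolute integrability of $\D_u^2\psi_1$ near $u=-\infty$, so the cross term $\int r^4(A\psi_1)\overline{\gamma_2}$ should be treated as an improper integral (integrate by parts on $[-R,0]$ and let $R\to\infty$); this is no worse than what the paper leaves implicit in its own $\frac{d}{dt}E(\psi)=0$ computation.
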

\begin{proof}Consider arbitrary $\Psi_0 \in \mathscr{B}^2$. Corresponding to $\Psi_0$ is a solution $\Psi$ of $\eqref{cp4}$, and $E(\Psi) = \langle \Psi,\Psi \rangle$ is conserved. Thus we have
\begin{align*}0 &= \frac{d}{dt}\langle \Psi, \Psi \rangle \\&= \langle \D_t\Psi, \Psi \rangle + \langle\Psi, \D_t\Psi\rangle \\ &= -i\langle H\Psi, \Psi\rangle + i \langle \Psi, H\Psi\rangle.\end{align*}
This shows that $\langle H\Psi, \Psi\rangle = \langle \Psi, H\Psi\rangle$. Now this expression holds independent of $t$, and in particular it holds at $t = 0$. Thus $\langle H\Psi_0, \Psi_0\rangle = \langle \Psi_0, H\Psi_0\rangle$. A simple polarization argument then shows that since $\langle H\Psi_0,\Psi_0\rangle \in \R$ for each $\Psi_0 \in\mathscr{B}^2$, $H$ is indeed symmetric on $\mathscr{B}^2$.\end{proof}

We can reduce this from a three-dimensional problem to a one-dimensional problem by projecting our solution onto the spherical harmonics: 
\begin{equation}\Psi(t,u,\theta,\phi) = \sum_{l=0}^{\infty}\sum_{|m|\leq l}\Psi^{lm}(t,u)Y_{lm}(\theta,\phi)\label{sphericalharmonicdecomp},\end{equation}
where the $Y_{lm}$ are the spherical harmonics (i.e. $\Delta_{S^2}Y_{lm} = -l(l+1)Y_{lm}$) and this series converges uniformly and absolutely for fixed $(t,u) \in \R \times (-\infty,0)$ (c.f. \cite{courant1}). The inner product $\langle \cdot,\cdot\rangle$ decomposes as
\begin{align}\langle \Psi,\Gamma \rangle &= \sum_{l=0}^{\infty}\sum_{|m|\leq l} \langle \Psi^{lm},\Gamma^{lm}\rangle_l \notag\\
&=\sum_{l=0}^{\infty}\sum_{|m|\leq l} \int_{-\infty}^0 r^4 \psi^{lm}_2\overline{\gamma_2^{lm}} + (\D_u\psi^{lm}_1)\overline{(\D_u\gamma^{lm}_1)} + \frac{r^2}{T^2}l(l+1)\psi^{lm}_1\overline{\gamma^{lm}_1}du,\label{innerproduct2}\end{align}
and the action of the Hamiltonian decomposes as
\begin{equation}H\Psi = \sum_{l=0}^{\infty}\sum_{|m|\leq l}H_l \Psi^{lm}Y_{lm},\end{equation}
where 
\begin{equation}H_l = \left(\begin{array}{cc}0 & 1\\ A_l & 0\end{array}\right) \text{ and } A_l = -\frac{1}{r^4}\D_u^2 + \frac{l(l+1)}{r^2T^2}.\label{defofHl}\end{equation}
We note also that\footnote{This is different from the black hole case, since we must now account for the fact that our solution of the original problem $\eqref{cp2}$ need not be supported away from the origin} $H_l$ is symmetric on $\mathscr{C}_l^2$, where $\psi \in \mathscr{C}_l$ if and only if $\psi \in C^{\infty}(-\infty,0)$ and 
\begin{enumerate}[(i)]
\item there exists $u_0 < 0$ so that $\psi(u,\theta,\phi) \equiv 0$ for all $u>u_0$;
\item $\psi_u = O\left(\frac{1}{u^3}\right) \text{ as } u \to -\infty$;
\item $\psi$ and all its derivatives have finite limits as $u \to -\infty$;
\item if $l\neq 0$, $\psi = O\left(\frac{1}{u^2}\right)$ as $u\to -\infty$.
\end{enumerate}
The symmetry statement follows, since for $\Psi^{lm}, \Gamma^{lm} \in \mathscr{C}_l^2$ we have
\begin{align*}\langle H_l \Psi^{lm}, \Gamma^{lm}\rangle_l &= \langle H(\Psi^{lm}Y_{lm}), \Gamma^{lm}Y_{lm}\rangle \\ &= \langle \Psi^{lm}Y_{lm}, H(\Gamma^{lm}Y_{lm})\rangle \\ &= \langle \Psi^{lm}, H_l \Gamma^{lm}\rangle_l.\end{align*}
The component functions $\Psi^{lm}$ are global, smooth solutions in $\mathscr{C}_l^2$ for each time $t$ of the Cauchy problem
\begin{equation}\begin{cases}i\D_t \Psi^{lm} = H_l \Psi^{lm} \text{ on } \R \times (-\infty,0) \\ \Psi^{lm}(0,u) = \Psi_0^{lm}(u) \in \mathscr{C}^2.\label{cp5}\end{cases}\end{equation}
That the $\Psi^{lm}$ satisfy conditions (i) - (iii) has been demonstrated; we must still verify condition (iv). This, however, follows from the fact that if $\psi(r,\theta,\phi) = \sum_{l=0}^{\infty}\sum_{|m|\leq l}\psi^{lm}(r)Y_{lm}(\theta,\phi)$ and $\psi$ is well-defined at the origin, then $\psi^{lm}(0) = 0$ for $l\neq 0$. Indeed then, since our first solution $\zeta$ was well-defined at the origin and smooth up to the origin with $\D_r\zeta(t,0,\theta,\phi) = 0$, it follows that $\zeta^{lm} = O(r^2)$ near the origin for $l\neq 0$ ($\zeta^{lm}$ being the component functions in the spherical harmonic expansion of $\zeta$). Translating this in terms of the $u$ variable implies that, indeed, for $l\neq 0$, $\Psi^{lm} = O\left(\frac{1}{u^2}\right)$ as $u\to -\infty$. The symmetry of the Hamiltonian implies that the energy $E_l(\Psi^{lm}) := \langle \Psi^{lm},\Psi^{lm}\rangle_l$ is conserved for solutions of $\eqref{cp5}$, and energy conservation implies that $\Psi^{lm}$ are the unique solutions of $\eqref{cp5}$ in $\mathscr{C}_l^2$.

\section{Spectral Analysis \& The Hamiltonian}
We wish to derive a representiation formula for $\Psi^{lm}$. To that end, we wish to apply Stone's formula to $H_l$, which expresses the spectral projections of $H_l$ in terms of the resolvent. However, Stone's formula applies to self-adjoint operators, so we must find a self-adjoint extension of $H_l$. To that end, we must find a Hilbert space on which $H_l$ is densely defined. Let us first note that we can write 
\begin{equation*}\langle \Psi,\Gamma \rangle_l = \langle \psi_1, \gamma_1\rangle_{l_1} + \langle \psi_2, \gamma_2\rangle_{l_2},\end{equation*}
where, of course, $\langle \cdot,\cdot \rangle_{l_1}, \langle \cdot,\cdot \rangle_{l_2}$ correspond to the terms in the integral in $\eqref{innerproduct2}$ acting on the first and second components of the input functions, respectively.
Then we let 
\[\mathscr{H}_{r^2} := \left(\left\{\psi : r^2\psi \in L^2(-\infty,0)\right\},\langle\cdot,\cdot\rangle_{l_1}\right)\]
and 
\[\mathscr{H}^1_{V_l} := \left(\left\{\psi : \psi_u \in L^2(-\infty,0)\text{ and } r^2v_l^{\frac{1}{2}}\psi \in L^2(-\infty,0)\right\},\langle\cdot,\cdot\rangle_{l_2}\right),\]
where $V_l = \frac{l(l+1)}{r^2T^2}$. Then we take $\mathscr{H}_{r^2,0}, \mathscr{H}^1_{V_l,0}$ to be the completion of $\mathscr{C}_l$ within $\mathscr{H}_{r^2},\mathscr{H}^1_{V_l}$, respectively. Finally, we take $\mathscr{H} = \mathscr{H}_{r^2,0} \oplus \mathscr{H}^1_{V_l,0}$.

\begin{prop}The operator $H_l$ with domain $\mathscr{D}(H_l) = \mathscr{C}_l^2$ is essentially self-adjoint in the Hilbert space $\mathscr{H}$.\label{essentiallyselfadjoint}\end{prop}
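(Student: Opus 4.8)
The plan is to use the basic deficiency-index criterion. Since $\mathscr{H}$ is by construction the completion of $\mathscr{C}_l$ in each factor, $\mathscr{C}_l^2$ is dense in $\mathscr{H}$, so $H_l$ is a densely defined symmetric operator (its symmetry was established above), and it is essentially self-adjoint precisely when both deficiency subspaces $\ker(H_l^* \mp i)$ are trivial. I would therefore fix $\lambda \in \{i,-i\}$ and show that any $\Phi = (\phi_1,\phi_2)^T \in \mathscr{D}(H_l^*)$ solving $H_l^*\Phi = \lambda\Phi$ must vanish.

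First I would pin down the action of $H_l^*$. Pairing $H_l\Psi$ against $\Phi$ in $\langle\cdot,\cdot\rangle_l$ and integrating by parts in $u$ (the boundary contributions vanishing because $\Psi \in \mathscr{C}_l^2$ carries the support and decay built into $\mathscr{C}_l$) shows that $H_l^*$ acts as the same formal matrix operator from $\eqref{defofHl}$ in the distributional sense. The eigenvalue system then decouples: $\phi_2 = \lambda\phi_1$ and $A_l\phi_1 = \lambda\phi_2 = \lambda^2\phi_1 = -\phi_1$, i.e.
\[ \phi_1'' = \left(r^4 + \frac{l(l+1)r^2}{T^2}\right)\phi_1 =: q(u)\phi_1 \qquad \text{on } (-\infty,0), \]
together with the membership conditions that $\phi_1$ lie in the energy factor $\mathscr{H}^1_{V_l,0}$ (so $\phi_1' \in L^2$) and, since $\phi_2 = \lambda\phi_1$ lies in the weighted factor $\mathscr{H}_{r^2,0}$, that $r^2\phi_1 \in L^2(-\infty,0)$ as well. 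As $A_l$ is elliptic with smooth coefficients on $(-\infty,0)$, elliptic regularity makes $\phi_1$ a classical smooth solution, and I may analyze the ODE directly. Note $q \ge 0$, with $q(u) > 0$ for all finite $u$.

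The core is a Weyl limit-point analysis at the two singular endpoints. At the far field $u \to 0^-$ one has $r \sim -1/u \to \infty$ by $\eqref{smalluasymptotics}$, hence $q \sim u^{-4} \to +\infty$; a WKB comparison gives one recessive solution decaying faster than any power and one dominant solution that blows up, and the requirement $r^2\phi_1 \in L^2$ (where $r^4 \sim u^{-4}$) forces $\phi_1$ onto the recessive branch, so that $\phi_1'\,\overline{\phi_1} \to 0$ there. At the origin $u \to -\infty$ I would use $\eqref{smallrasymptotics}$ and the finite positive limits of $T,K$ at $r=0$: for $l \neq 0$ the leading balance is of Euler type, $\phi_1'' \sim c\,u^{-2}\phi_1$ with $c = l(l+1)/T(0)^2 > 0$, whose indicial exponents are one value $> 1$ and one $< 0$, corresponding respectively to the singular behaviour $\zeta^{lm}\sim r^{-l-1}$ and the regular behaviour $\zeta^{lm}\sim r^{l}$; for $l = 0$ one has $q \sim u^{-4}\to 0$ with solutions approaching a constant or growing linearly. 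In every case the non-decaying branch has $\phi_1' \notin L^2$ near $-\infty$, so membership in $\mathscr{H}^1_{V_l,0}$ again selects the recessive solution and yields $\phi_1'\,\overline{\phi_1} \to 0$ as $u\to-\infty$. This establishes the limit-point alternative at both ends.

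Finally I would close with an energy identity: testing $A_l\phi_1 = -\phi_1$ against $r^4\,\overline{\phi_1}$, integrating over $(-\infty,0)$, and integrating by parts once gives
\[ \int_{-\infty}^0 \left( |\phi_1'|^2 + \frac{l(l+1)r^2}{T^2}|\phi_1|^2 + r^4|\phi_1|^2 \right) du = \big[\, \phi_1'\,\overline{\phi_1}\, \big]_{-\infty}^{\,0^-}, \]
whose right-hand side vanishes by the previous paragraph. Every term on the left is nonnegative, so $\int r^4|\phi_1|^2\,du = 0$, forcing $\phi_1 \equiv 0$ and hence $\Phi = 0$; as $\lambda = \pm i$ were arbitrary, both deficiency indices are zero and $H_l$ is essentially self-adjoint. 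I expect the genuine obstacle to be the origin endpoint $u\to-\infty$: unlike the black-hole case, where the data are supported away from the horizon and this end is harmless, here one must invoke the particlelike regularity at $r=0$ (the conditions $K(0)=1$, $T'(0)=K'(0)=0$ and the sharp $O(r^2)$ control in $\eqref{smallrasymptotics}$) to fix the indicial behaviour and to verify that the growing branch is excluded by the energy norm; the mode $l=0$, where the first-component seminorm sees only $\phi_1'$ and is blind to additive constants, will need separate care to confirm that the limiting constant represents the zero element of $\mathscr{H}^1_{V_l,0}$.
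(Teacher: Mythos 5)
Your proposal takes a genuinely different route from the paper. The paper does no ODE analysis here at all: it observes that the Cauchy problem $\eqref{cp5}$ has already been solved globally in $\mathscr{C}_l^2$ with conserved energy, so the solution operators $U(t)$ form a strongly continuous one-parameter unitary group on $\mathscr{H}$ leaving the dense domain $\mathscr{C}_l^2$ invariant, and then invokes the converse half of Stone's theorem (Theorem~\ref{stonestheorem}) to conclude that $i^{-1}$ times the strong derivative, namely $-H_l$, is essentially self-adjoint there. Your deficiency-index/limit-point strategy is a legitimate alternative and would, if completed, yield more refined information about the operator; but it trades the paper's soft argument for hard analysis, and as written it has a genuine gap at its very first step.

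The gap is the identification of $H_l^*$ and the resulting decoupling $\phi_2=\lambda\phi_1$, $A_l\phi_1=-\phi_1$. The inner product $\langle\cdot,\cdot\rangle_l$ is not an $L^2$ pairing: the slot seen by the first components is the Dirichlet-type form $\int \D_u\psi_1\overline{\D_u\gamma_1}+\tfrac{l(l+1)r^2}{T^2}\psi_1\overline{\gamma_1}\,du$. Testing $\langle H_l\Psi,\Phi\rangle_l=\langle\Psi,\Theta\rangle_l$ against $\Psi=(\psi_1,0)$ therefore only yields the distributional identity $-\theta_1''+\tfrac{l(l+1)r^2}{T^2}\theta_1=-\phi_2''+\tfrac{l(l+1)r^2}{T^2}\phi_2$, which pins down the first component $\theta_1$ of $H_l^*\Phi$ only modulo solutions $w$ of the homogeneous equation $-w''+\tfrac{l(l+1)r^2}{T^2}w=0$ (for $l=0$, modulo affine functions). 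Since $\theta_1$ lives in $\mathscr{H}^1_{V_l,0}$ while $\phi_2$ lives in the differently weighted space $\mathscr{H}_{r^2,0}$, there is no a priori reason that $w=0$; the deficiency system is really $\lambda\phi_1=\phi_2+w$, $A_l\phi_1=\lambda\phi_2$, so that $\phi_1''=r^4\phi_1+\tfrac{l(l+1)r^2}{T^2}\phi_1+\lambda r^4 w$, and your energy identity acquires the term $-\lambda\int r^4 w\,\overline{\phi_1}\,du$, which is neither obviously real nor sign-definite. Closing this requires (a) exploiting the full test class $\mathscr{C}_l$, whose elements do \emph{not} vanish as $u\to-\infty$, to extract the boundary conditions at that end which kill part of $w$, and (b) a concrete realization of the abstract completion $\mathscr{H}^1_{V_l,0}$ (e.g.\ that its elements vanish as $u\to0^-$) which, combined with $r^2\phi_2\in L^2$ near $u=0$ where $r^2\sim u^{-2}$, forces the remaining constant to vanish. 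This is exactly the $l=0$ subtlety you flag in your last sentence, but it is not a peripheral worry about one mode: it sits upstream of your entire endpoint analysis, since without it the ODE you analyze is not known to be the right one. (A minor further point: the indicial exponents at $u\to-\infty$ are $\tfrac12\pm\sqrt{\tfrac14+l(l+1)/T(0)^2}$, as $T(0)$ need not equal $1$; this does not affect the limit-point conclusion, but your identification with the exponents $r^l$, $r^{-l-1}$ is only heuristic.)
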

\begin{proof}To prove this, we will use the following version of Stone's theorem (c.f. \cite{reedsimon1}, Sec. VIII.4):
\begin{thm}[{\bf Stone's Theorem}] Let $U(t)$ be a strongly continuous one-parameter unitary group on a Hilbert space $\mathscr{H}$. Then there is a self-adjoint operator $A$ on $\mathscr{H}$ so that $U(t) = e^{itA}$. Furthermore, let $\mathscr{D}$ be a dense domain which is invariant under $U(t)$ and on which $U(t)$ is strongly differentiable. Then $i^{-1}$ times the strong derivative of $U(t)$ is essentially self-adjoint on $\mathscr{D}$ and its closure is $A$.\label{stonestheorem}\end{thm}
Consider the Cauchy problem $\eqref{cp5}$. We have already demonstrated that there is a unique solution $\Psi^{lm}$ to this problem in $\mathscr{C}_l^2$ for each time $t$. Therefore we may define the operators
\begin{equation*}U(t): \mathscr{C}_l^2 \mapsto \mathscr{C}_l^2\text{ by } \end{equation*}
\begin{equation*} U(t)\Psi_0^{lm} = \Psi^{lm}(t).\end{equation*}
The energy conservation guarantees that the $U(t)$ are unitary on $\mathscr{C}_l^2$ with respect to $\langle \cdot, \cdot \rangle_l$ and they therefore extend to unitary operators on $\mathscr{H}$. The uniqueness of the solution guarantees that $U(0) = Id.$ and $U(t)U(s) = U(t+s)$ for all $t,s \in \R$, and thus the $U(t)$ form a one-parameter unitary group on $\mathscr{H}$.

We next wish to show that the $U(t)$ are strongly continuous on $\mathscr{H}$. Thus let $\Psi \in \mathscr{H}$. Then there exists $(\Psi_n) \subset \mathscr{C}_l^2$ such that $\Psi_n \to \Psi$ and we have 
\begin{equation*} \|U(t)\Psi - \Psi \| \leq \| U(t) \Psi - U(t) \Psi_n \| + \| U(t) \Psi_n - \Psi_n \| + \|\Psi_n - \Psi\|.\end{equation*}
Thus since the $U(t)$ are unitary and since $U(t)$ is obviously strongly continuous on $\mathscr{C}_l^2$, it follows that the $U(t)$ are strongly continuous on $\mathscr{H}$. Moreover, the smoothness of the solution guarantees that the $U(t)$ are strongly differentiable on $\mathscr{C}_l^2$. A simple calculation shows that for $(\psi_1,\psi_2)^T \in \mathscr{C}_l^2$,
\begin{equation*}\lim_{h\searrow 0}\frac{1}{h}\left(U(t)(\psi_1,\psi_2)^T - (\psi_1,\psi_2)^t\right) = (-i\psi_2, -iA_l\psi_1)^T = -iH_l(\psi_1,\psi_2)^T,\end{equation*}
and thus that $i^{-1}$ times the strong derivative of $U(t)$ is $-H_l$. 

Therefore, since $\mathscr{C}_l^2$ is invariant under $U(t)$, $H_l$ is essentially self-adjoint on $\mathscr{C}_l^2$. \end{proof}
Thus, $H_l$ has a unique self-adjoint extension $\bar{H}_l$ defined on a dense domain in $\mathscr{H}$ containing $\mathscr{C}_l^2$. The specifics of the domain, however, are irrelevant to our study, so we ignore these details.

\section{Stone's Formula \& The Resolvent}We now recall Stone's formula (c.f. \cite{reedsimon1}), since this is the tool by which we will derive a representation formula for $\Psi^{lm}$:
\begin{thm}[{\bf Stone's Formula}]For a self-adjoint operator $A$, the spectral projections are given by
\begin{equation}\frac{1}{2}\left(P_{[a,b]} + P_{(a,b)}\right) = \lim_{\eps\searrow 0} \frac{1}{2\pi i}\int_a^b \left[ (A - \omega - i\eps)^{-1} - (A - \omega + i\eps)^{-1}\right]d\omega,\end{equation}
where the limit is taken in the strong operator topology.\label{stoneformula}\end{thm}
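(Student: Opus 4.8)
The statement recalled here is the standard Stone's Formula from \cite{reedsimon1}, so in the paper it suffices to cite the reference; but if one wanted to establish it, the plan is to derive it directly from the spectral theorem. That theorem furnishes a projection-valued measure $E$ with $A = \int_{\R}\lambda\,dE(\lambda)$ and, for $z\notin\R$, the resolvent representation $(A-z)^{-1} = \int_{\R}(\lambda-z)^{-1}\,dE(\lambda)$. The whole argument then collapses to a scalar computation carried out inside the functional calculus, tested against an arbitrary fixed vector $f$ so that everything is referred to the genuine finite measure $\mu_f(\cdot) := \langle f, E(\cdot)f\rangle$ of total mass $\|f\|^2$.

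First I would combine the two resolvents appearing in the integrand. For $\omega\in[a,b]$ and $\eps>0$ neither $\omega\pm i\eps$ lies in the spectrum, so both resolvents exist and the spectral representation gives
\[
(A-\omega-i\eps)^{-1} - (A-\omega+i\eps)^{-1} = \int_{\R}\left[\frac{1}{\lambda-\omega-i\eps} - \frac{1}{\lambda-\omega+i\eps}\right]dE(\lambda) = \int_{\R}\frac{2i\eps}{(\lambda-\omega)^2+\eps^2}\,dE(\lambda).
\]
The integrand is norm-continuous in $\omega$ on $[a,b]$, so the $\omega$-integral exists as an operator-valued integral; dividing by $2\pi i$ and interchanging the $\omega$-integral with the spectral integral (justified by Fubini on $\mu_f$, the kernel being nonnegative and $[a,b]\times\R$ of finite product measure) yields
\[
\frac{1}{2\pi i}\int_a^b\left[(A-\omega-i\eps)^{-1} - (A-\omega+i\eps)^{-1}\right]d\omega = \int_{\R}\left[\frac{1}{\pi}\int_a^b\frac{\eps}{(\lambda-\omega)^2+\eps^2}\,d\omega\right]dE(\lambda) =: g_\eps(A),
\]
where $g_\eps(\lambda) = \frac{1}{\pi}\bigl(\arctan\frac{\lambda-a}{\eps} - \arctan\frac{\lambda-b}{\eps}\bigr)$ is the $\omega$-integral of the Poisson kernel.

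The crux is then the scalar limit $\eps\searrow 0$. I would check that $g_\eps(\lambda)\to g(\lambda) := \tfrac12(\chi_{[a,b]}+\chi_{(a,b)})(\lambda)$ pointwise, equal to $1$ on $(a,b)$, to $\tfrac12$ at the endpoints $\lambda=a,b$, and to $0$ off $[a,b]$. Since $g(A) = \tfrac12(P_{[a,b]}+P_{(a,b)})$ by the functional calculus, strong convergence reduces to showing
\[
\left\|g_\eps(A)f - \tfrac12(P_{[a,b]}+P_{(a,b)})f\right\|^2 = \int_{\R}\bigl|g_\eps(\lambda)-g(\lambda)\bigr|^2\,d\mu_f(\lambda),
\]
and the right-hand side tends to $0$ as $\eps\searrow 0$ by dominated convergence, since $0\le g_\eps\le 1$ forces $\bigl|g_\eps-g\bigr|^2\le 1$ and $\mu_f$ is finite.

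The main obstacle I anticipate is not the scalar limit but the two passages between scalar and operator-valued integration, together with the fact that the convergence is only strong and not in norm. Both the Fubini interchange and the final limit must be routed through the measures $\mu_f$ for individual vectors $f$; this is exactly what delivers strong operator convergence. Norm convergence genuinely fails here: $g_\eps\to g$ cannot be uniform on the spectrum because $g$ has a jump discontinuity at any $a$ or $b$ inside the spectrum, so the continuous $g_\eps$ cannot approximate it uniformly — which is why the symmetric average $\tfrac12(P_{[a,b]}+P_{(a,b)})$, and not $P_{[a,b]}$ or $P_{(a,b)}$ alone, is the correct endpoint value.
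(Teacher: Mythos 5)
Your proof is correct: the resolvent difference collapses to the Poisson kernel, the $\omega$-integral gives $g_\eps(\lambda)=\frac{1}{\pi}\bigl(\arctan\frac{\lambda-a}{\eps}-\arctan\frac{\lambda-b}{\eps}\bigr)$, and the pointwise limit $g_\eps\to\frac12(\chi_{[a,b]}+\chi_{(a,b)})$ together with $0\le g_\eps\le 1$ and dominated convergence against the finite measures $\mu_f$ yields exactly the claimed strong convergence. The paper itself offers no proof --- it simply recalls the theorem from \cite{reedsimon1} --- and your argument is the standard one given in that reference, so there is nothing to fault; your closing remark about why only strong (not norm) convergence can hold when an endpoint lies in the spectrum is a correct and worthwhile observation.
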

So we see then that in order to utilize Stone's formula, we must study the resolvent of $\bar{H}_l$. To that end, we consider the eigenvalue equation
\begin{equation}\bar{H}_l \Gamma = \omega \Gamma.\label{eigenvalueequation}\end{equation}
Since $\bar{H}_l$ is self-adjoint on a domain in $\mathscr{H}$, it follows that the spectrum $\sigma(\bar{H}_l) \subset \R$ and that the resolvent $(\bar{H}_l - \omega)^{-1}: \mathscr{H} \mapsto \mathscr{H}$ exists for all $\omega \in \C\setminus\R$. Thus, the eigenvalue equation $\eqref{eigenvalueequation}$ has no solutions in $\mathscr{H}$ for $\Im \omega \neq 0$. However, $\eqref{eigenvalueequation}$ is equivalent to the ODE 
\begin{equation}-\gamma''(u) - \omega^2 r^4\gamma + \frac{r^2}{T^2}l(l+1)\gamma = 0 \text{ on } (-\infty,0),\label{gammaequation}\end{equation}
where the arguments are $r = r(u)$ and $T = T(r(u))$. We will construct the resolvent out of solutions to this ODE.

To solve this ODE, let us first note that if we consider the coordinate $s(u)$ given by
\begin{equation}s(u) = \int_{-\infty}^u r^2(u')du'\label{defofs}\end{equation}
and let 
\begin{equation}\eta(s) = r(u(s))\gamma(u(s)),\label{defofeta}\end{equation}
then $\eta$ solves the ODE
\begin{equation}-\eta''(s) - \omega^2\eta(s) + \left(\frac{l(l+1)}{r^2T^2} - \frac{1}{rT^2K^2}\left(\frac{T'}{T} + \frac{K'}{K}\right)\right)\eta = 0 \text{ on } (0,\infty).\label{etaode}\end{equation}
Let us note that we may regard $s$ as a function of $r$ by considering $s(u(r))$, which yields 
\begin{equation}s(u(r)) = \int_0^rK(r')T(r')dr'\label{defofs2}.\end{equation}

We now look to construct two linearly independent solutions of the ODE $\eqref{etaode}$, one satisfying boundary conditions at $s = 0$ and the other satisfying asymptotic boundary conditions at $s = \infty$. In what follows, we will let $\lambda = l + \frac{1}{2}$, so that $l(l+1) = \lambda^2 - \frac{1}{4}$. An excellent reference for what follows is \cite{potentialscattering} and we give them credit for the basic idea in finding these solutions.
\subsection{The Solution with Boundary Conditions at $s =0$}
We first consider the solution satisfying boundary conditions at $s = 0$: call this solution $\eta^1(\lambda, \omega,s)$. We shall require 
\begin{equation}\lim_{s\searrow 0}\eta^1(\lambda,\omega,s)s^{-\lambda - \frac{1}{2}} = 1.\label{eta1bc}\end{equation}
We will construct $\eta^1(\lambda,\omega,s)$ via a perturbation series, so let us define
\begin{equation}\eta^1_0(\lambda,\omega,s) = \left(\frac{2}{\omega}\right)^{\lambda}\Gamma(\lambda + 1)J_{\lambda}(\omega s) \text{ for } \omega \neq 0,\label{defofeta10}\end{equation}
where $\Gamma$ is the gamma function and $J_{\lambda}$ is the Bessel function of the first kind (c.f. \cite{watsonbesselfunctions} on Bessel functions and \cite{specialfunctions} on Bessel and Hankel functions). Then, we rewrite the ODE $\eqref{etaode}$ as 
\begin{equation}\eta''(s) + \left(\omega^2 - \frac{\lambda^2 - \frac{1}{4}}{s^2}\right)\eta(s) = \left(\left(\lambda^2 - \frac{1}{4}\right)\left[\frac{1}{r^2T^2} - \frac{1}{s^2} \right] - \frac{1}{rT^2K^2}\left(\frac{T'}{T} + \frac{K'}{K}\right)\right)\eta(s).\label{etaoderewrite}\end{equation}

The Green's function for the operator on the left-hand side of the above equation (satisfying zero boundary conditions at $s = 0$) is 
\begin{equation}G(\lambda,\omega,s,y) = H(s - y)\frac{1}{2\lambda}\left(\eta_0^1(\lambda,\omega,s)\eta^1_0(-\lambda,\omega,y) - \eta^1_0(-\lambda,\omega,s)\eta_0^1(\lambda,\omega,y)\right),\label{defofG}\end{equation}
where $H$ is the usual Heaviside function.

One then obtains the integral equation 
\begin{equation}\eta^1(\lambda,\omega,s) = \eta_0^1(\lambda,\omega,s) + \int_0^s G(\lambda,\omega, s,y)W(y)\eta^1(\lambda,\omega,y)dy,\label{eta1intequation}\end{equation}
where 
\begin{equation}W(y) = \left(\lambda^2 - \frac{1}{4}\right)\left(\frac{1}{r^2T^2} - \frac{1}{y^2}\right) + V(y), V(y) = -\frac{1}{rT^2K^2}\left(\frac{K'}{K} + \frac{T'}{T}\right).\label{defofW}\end{equation}
Note that since $W$ is integrable, a smooth solution of this integral equation will indeed be a solution of the ODE $\eqref{etaode}$ with boundary conditions $\eqref{eta1bc}$.

We next show that $W(y)$ is integrable. To that end, note that $V(s) = O(1)$ as $s \to 0$ (due to $K'(0) = 0 =T'(0)$) and that $V(s) = O\left(\frac{1}{s^3}\right)$ as $s\to \infty$, since $s \sim r$ for large $s$ and the condition $\eqref{TKasymptotics}$. To study the other term in $W$, first note that $s = T(0)r + O(r^3)$ for small $s$, and this implies that 
\begin{equation*}\left|\frac{1}{r^2T^2} - \frac{1}{s^2}\right| = O(1)\end{equation*}
for small $s$. Finally, for large $s$, we use the conditions $\eqref{TKto1}$ which imply that 
\begin{equation*}\left|\frac{1}{r^2T^2} - \frac{1}{s^2}\right| = O\left(\frac{\log s}{s^3}\right).\end{equation*}
Thus, altogether this yields that $W$ is $O(1)$ near the origin and decays like $\frac{\log s}{s^3}$ as $s\to\infty$, and so $\|W\|_{L^1(0,\infty)}< \infty$.

Now, in Appendix A of \cite{potentialscattering}, it is shown that for $0<y<s$ we have
\begin{equation}|G(\lambda,\omega,s,y)| \leq Ce^{|\Im \omega | (s-y)}\left(\frac{s}{1 + |\omega| s}\right)^{\lambda + \frac{1}{2}}\left(\frac{y}{1 + |\omega|y}\right)^{-\lambda + \frac{1}{2}},\label{Gbound}\end{equation}
for some $C > 0$ depending on $\lambda$. Then we write
\begin{equation}\eta^1(\lambda,\omega,s) = \sum_{n=0}^{\infty}\eta^1_n(\lambda,\omega,s)\label{eta1series}\end{equation}
for 
\begin{equation}\eta^1_{n}(\lambda,\omega,s) = \int_0^sG(\lambda,\omega,s,y)W(y)\eta^1_{n-1}(\lambda,\omega,y)dy.\label{eta1recurrence}\end{equation}
In the same appendix, it is shown that 
\begin{equation}|\eta^1_0(\lambda,\omega,s)|\leq Ce^{|\Im \omega|s}\left(\frac{s}{1 + |\omega|s}\right)^{\lambda + \frac{1}{2}}\label{eta10bound}\end{equation}
and it's then easy to show by induction and using $\eqref{Gbound}$, that 
\begin{equation}|\eta^1_n(\lambda,\omega,s)|\leq C e^{|\Im \omega|s} \frac{(C\cdot P(s))^n}{n!}\left(\frac{s}{1 + |\omega|s}\right)^{\lambda + \frac{1}{2}},\label{eta1nbound}\end{equation}
where
\begin{equation}P(s) = \int_0^s \frac{yW(y)}{1 + |\omega|y}dy\label{defofP}.\end{equation}

Thus the series $\eqref{eta1series}$ is bounded term-by-term by an exponential series and the following bounds are immediate:
\begin{equation}|\eta^1(\lambda,\omega,s)| \leq C e^{|\Im \omega| s}e^{CP(s)}\left(\frac{s}{1 + |\omega|s}\right)^{\lambda + \frac{1}{2}},\label{eta1bound}\end{equation}
as well as
\begin{equation}|\eta^1(\lambda,\omega,s) - \eta^1_0(\lambda,\omega,s)| \leq C e^{|\Im \omega| s}\left(e^{CP(s)} - 1\right)\left(\frac{s}{1 + |\omega|s}\right)^{\lambda + \frac{1}{2}}.\label{eta1minuseta10bound}\end{equation}
Also, since the series $\eqref{eta1series}$ is bounded term-by-term by an exponential series, it follows that it converges uniformly on compact sets in $s$.

We now check the smoothness of $\eta^1_0$. First let us control $\frac{\D G}{\D s}$. To that end, we observe that a short calculation gives
\begin{equation}\frac{d}{ds}\eta^1_0(\lambda,\omega,s) = \frac{1}{2s}\eta^1_0(\lambda,\omega,s) + \lambda \eta^1_0(\lambda - 1,\omega,s) - \frac{1}{\lambda + 1}\left(\frac{\omega}{2}\right)^2\eta^1_0(\lambda,\omega,s).\label{deta10}\end{equation}
Using the bound $\eqref{eta10bound}$ we can bound $\frac{d}{ds}\eta^1_0(\lambda,\omega,s)$:
\begin{equation}\left|\frac{d}{ds}\eta^1_0(\lambda,\omega,s)\right| \leq C e^{|\Im \omega|s}\left(\frac{s}{1 + |\omega|s}\right)^{\lambda - \frac{1}{2}},\label{deta1bound}\end{equation}
and $C$ is some constant depending on $\lambda$. We can then use this to bound $\frac{\D G}{\D s}$:
\begin{equation}\left| \frac{\D G}{\D s}(\lambda,\omega,s,y)\right| \leq C e^{|\Im \omega|(s+ y)}\left(\frac{s}{1 + |\omega|s}\right)^{\lambda + \frac{1}{2}}\left(\frac{y}{1 + |\omega|y}\right)^{-\lambda - \frac{1}{2}}.\label{dGbound}\end{equation}
Thus $\eqref{dGbound}$ and $\eqref{eta1bound}$ enable us to compute $\frac{d}{ds}\eta^1(\lambda,\omega,s)$ as 
\begin{equation}\frac{d}{ds} \eta^1(\lambda,\omega,s) = \frac{d}{ds}\eta^1_0(\lambda,\omega,s) + \int_0^s \frac{\D G}{\D s}(\lambda,\omega,s,y)W(y)\eta^1(\lambda,\omega,y)dy.\label{deta}\end{equation}
This yields 
\begin{equation}\left|\frac{d}{ds} \eta^1(\lambda,\omega,s) - \frac{d}{ds}\eta^1_0(\lambda,\omega,s)\right| \leq C e^{3|\Im \omega|s}\left(\frac{s}{1 + |\omega|s}\right)^{\lambda + \frac{1}{2}},\label{deta1minusdeta10bound}\end{equation}
since $P$ is bounded as $s \to \infty$ and $W$ is integrable. We can carry out a similar procedure to bound $\frac{\D^2G}{\D s^2}$ and conclude that $\eta^1 \in C^2(0,\infty)$, and thus, $\eta^1$ solves the ODE $\eqref{etaode}$ along with the boundary conditions $\eqref{eta1bc}$.

We also claim that $\eta^1$ is analytic in $\omega$ in the region $\omega \neq 0$; we will show this using Morera's theorem. So first note that $\eta^1_0(\lambda,\omega,s)$, for fixed $s\in (0,\infty)$, is analytic for $\omega \neq 0$. Assume that the same holds for $\eta^1_n(\lambda,\omega,s)$. Recall the definition $\eqref{eta1recurrence}$. It's easy to show the continuity of $\eta^1_{n+1}$ in $\omega$ using the dominated convergence theorem, the analyticity of $G$ in $\omega$, and the induction hypothesis. Then let $C$ be a closed contour in $\C\setminus\{0\}$ and consider
\begin{equation*}\int_C\eta^1_{n+1}(\lambda,\omega,s)d\omega = \int_C\int_0^sG(\lambda,\omega,s,y)W(y)\eta^1_n(\lambda,\omega,y)dyd\omega.\end{equation*}
Using the integrability of $W$ and the bounds $\eqref{Gbound}, \eqref{eta1nbound}$, we may interchange the order of integration, and the analyticity of $G$ and $\eta^1_n$ then yields that the integral is zero. Morera's theorem then guarantees that $\eta^1_{n+1}$ is analytic in $\omega$, and by induction, this holds for each $n\in \N$. Furthermore, the uniform convergence of the series $\eqref{eta1series}$ then yields that $\eta^1(\lambda,\omega,s)$ is analytic in $\omega$ in the region $\C\setminus\{0\}$ for fixed $s\in (0,\infty)$.

We note also that the only restriction on $\omega$ is that $\omega \neq 0$, but we claim that, in fact, $\eta^1(\lambda,\omega,s)$ can be extended continuously to $\omega = 0$. To this end, let us first demonstrate that the integral equation $\eqref{eta1intequation}$ has a unique solution. Indeed, suppose that $\tilde{\eta}$ is another solution of $\eqref{eta1intequation}$ and fix $s>0$. Since it must be that $\lim_{s\searrow 0}\tilde{\eta}(\lambda,\omega,s)s^{-\lambda - \frac{1}{2}} = 1$, there is a constant $C>0$ so that 
\begin{equation*}\tilde{\eta}(\lambda,\omega,y) \leq C e^{|\Im \omega|y}\left(\frac{y}{1 + |\omega|y}\right)^{\lambda + \frac{1}{2}}.\end{equation*}
One then shows by induction, as before, that 
\begin{equation*}\left|\tilde{\eta}(\lambda,\omega,s) - \sum_{n = 0}^{N}\eta^1_n(\lambda,\omega,s)\right| \leq C \frac{(CP(s))^N}{N!}e^{|\Im \omega|s}\left(\frac{s}{1 + |\omega|s}\right)^{\lambda + \frac{1}{2}},\end{equation*}
which as $N\to \infty$ yields that $\tilde{\eta} \equiv \eta^1$. Now, to show that $\eta^1$ may be extended continuously to $\omega = 0$, we first rewrite the ODE $\eqref{etaode}$ as 
\begin{equation}\eta''(s) -\frac{\lambda^2 - \frac{1}{4}}{s^2}\eta(s) = \left(\left(\lambda^2 - \frac{1}{4}\right)\left[\frac{1}{r^2T^2} - \frac{1}{s^2} \right] - \frac{1}{rT^2K^2}\left(\frac{T'}{T} + \frac{K'}{K}\right) - \omega^2\right)\eta(s).\end{equation}
The operator on the left-hand side here (with zero boundary conditions at $s = 0$) has the Green's function $H(s-y)\left(s^{\lambda + \frac{1}{2}}y^{-\lambda + \frac{1}{2}} - y^{\lambda + \frac{1}{2}}s^{-\lambda + \frac{1}{2}}\right)$. We thus obtain the integral equation
\begin{equation}\eta^{1,0}(\lambda, \omega, s) = s^{\lambda + \frac{1}{2}} + \int_0^s \left(W(y) - \omega^2\right)\sqrt{sy}\left[\left(\frac{s}{y}\right)^{\lambda} - \left(\frac{y}{s}\right)^{\lambda}\right]\eta^{1,0}(\lambda,\omega,y)dy.\end{equation}
We again solve this via a perturbation series: 
\begin{equation}\eta^{1,0}(\lambda,\omega,s) = \sum_{n=0}^{\infty}\eta^{1,0}_n(\lambda,\omega,s),\end{equation}
where $\eta^{1,0}(\lambda,\omega,s) = s^{\lambda + \frac{1}{2}}$ and 
\begin{equation*}\eta^{1,0}_{n+1}(\lambda,\omega,s) = \int_0^s\left(W(y) - \omega^2\right)\sqrt{sy}\left[\left(\frac{s}{y}\right)^{\lambda} - \left(\frac{y}{s}\right)^{\lambda}\right]\eta^{1,0}_{n}(\lambda,\omega,y)dy.\end{equation*}
For $0<y<s$ we use the obvious bound
\begin{equation*}\left[\left(\frac{s}{y}\right)^{\lambda} - \left(\frac{y}{s}\right)^{\lambda}\right] \leq 2\left(\frac{s}{y}\right)^{\lambda}\end{equation*}
and we can easily show by induction that
\begin{equation}|\eta^{1,0}_{n}(\lambda,\omega,s)| \leq \frac{s^{\lambda + \frac{1}{2}}}{n!\lambda^n}\left(\tilde{P}(s)\right)^n,\end{equation}
where
\begin{equation*}\tilde{P}(s) = \int_0^s \left(|W(y)| + |\omega|^2\right)dy.\end{equation*}
This shows that $\eta^{1,0}$ exists and it is obviously continuous in $\omega$ for small $\omega$. We can follow a similar procedure as above to verify that $\eta^{1,0}$ is analytic in $\omega$ (for any finite $\omega$) for fixed $s >0$ and at least twice continuously differentiable for $s$ for $s >0$. Thus, $\eta^{1,0}$ is a solution of the ODE $\eqref{etaode}$ and, due to the boundary conditions, it also solves the integral equation $\eqref{eta1intequation}$. Using the uniqueness shown above, it follows that $\eta^1 = \eta^{1,0}$. (One might justifiably ask why we bother at all with $\eta^1$. The reason is that the asymptotics for large $\omega$ are imperative to obtain a decay result, but it is difficult to analyze $\eta^{1,0}$ for large $\omega$.) Thus, $\eta^1$ may be extended continuously to $\omega = 0$. We remark that the uniqueness also guarantees that $\eta^1(\lambda,\bar{\omega},s) = \overline{\eta^1(\lambda,\omega,s)}$, and we note that, as can be seen from the construction above, $\eta^{1,0}$ is real-valued for $\omega \in \R$, and hence, $\eta^1$ is real valued for $\omega \in \R$.

\subsection{The Solution with Boundary Conditions at $s = \infty$}
We move on now to construction a solution of $\eqref{etaode}$ satisfying asymptotic boundary conditions as $s \to \infty$; call this solution $\eta^2(\lambda,\omega,s)$. We restrict ourselves for the moment to $\Im \omega \leq 0$, $\omega \neq 0$. Rewriting this ODE again as in $\eqref{etaoderewrite}$, we find the Green's function for the operator on the left-hand side with zero boundary conditions at $ s=\infty$ is given by
\begin{equation}B(\lambda,\omega,s,y) = H(y-s)\frac{i}{2\omega}\left(\eta^2_0(\lambda,\omega,y)\eta^2_0(\lambda,-\omega,s) - \eta^2_0(\lambda,\omega,s)\eta^2_0(\lambda,-\omega,y)\right),\label{defofB}\end{equation}
where
\begin{equation}\eta^2_0(\lambda,\omega,s) = \left(\frac{1}{2}\pi \omega s\right)^{\frac{1}{2}}e^{-\frac{i\pi}{2}\left(\lambda + \frac{1}{2}\right)}H_{\lambda}^{(2)}(\omega s)\label{eta20def}\end{equation}
and $H_{\lambda}^{(2)}$ is the Hankel function of the second kind. Note that $\lim_{s\to \infty}\eta^2_0(\lambda,\omega,s)e^{i\omega s} = 1$. Thus, if we require 
\begin{equation}\lim_{s \to \infty}\eta^2(\lambda,\omega,s)e^{i\omega s} = 1,\label{eta2bc}\end{equation}
then the equivalent integral equation for $\eta^2$ is
\begin{equation}\eta^2(\lambda,\omega,s) = \eta^2_0(\lambda,\omega,s) + \int_s^{\infty}B(\lambda,\omega,s,y)W(y)\eta^2(\lambda,\omega,y)dy.\label{eta2intequation}\end{equation}

We wish to solve this as a perturbation series, so we write 
\begin{equation}\eta^2(\lambda,\omega,s) = \sum_{n=0}^{\infty}\eta^2_{n}(\lambda,\omega,s),\label{eta2series}\end{equation}
where
\begin{equation}\eta^2_{n+1} = \int_s^{\infty}B(\lambda,\omega,s,y)W(y)\eta^2_n(\lambda,\omega,y)dy.\label{eta2ndef}\end{equation}
To address convergence, we note that it is shown in appendix A of \cite{potentialscattering} that for $0<s<y$ we have
\begin{equation}|\eta^2_0(\lambda,\omega,s)| \leq C \left(\frac{|\omega|s}{1 + |\omega|s}\right)^{-\lambda + \frac{1}{2}}e^{(\Im \omega)s}\label{eta20bound}\end{equation}
and
\begin{equation}|B(\lambda,\omega,s,y)| \leq C e^{|\Im \omega| y + (\Im \omega)s}\left(\frac{y}{1 + |\omega| y}\right)^{\lambda + \frac{1}{2}}\left(\frac{s}{1 + |\omega|s}\right)^{-\lambda + \frac{1}{2}}\label{Bbound}\end{equation}
where $C$ depends on $\lambda$. It is easy to show then by induction that
\begin{equation}|\eta_n^2(\lambda,\omega,s)| \leq C \frac{(CQ(s))^n}{n!}\left(\frac{|\omega|s}{1 + |\omega|s}\right)^{-\lambda + \frac{1}{2}}e^{(\Im \omega)s},\label{eta2nbound}\end{equation}
where
\begin{equation}Q(s) = \int_s^{\infty}\frac{y|W(y)|}{1 + |\omega| y}e^{(|\Im \omega| + \Im \omega )y}dy.\label{defofQ}\end{equation}
Note that for $\Im \omega \leq 0$, $Q$ is finite for all $s \in [0,\infty)$ and indeed $\|Q\|_{L^1([0,\infty))} < \infty$, owing to the integrability of $W$ and our requirement that $\Im \omega \leq 0$. Thus $\eta^2$ exists (for $\Im \omega \leq 0$ and $\omega \neq 0$), and the following bounds are obvious
\begin{equation}|\eta^2(\lambda,\omega,s)| \leq Ce^{(\Im \omega)s}\left(\frac{|\omega|s}{1 + |\omega|s}\right)^{-\lambda + \frac{1}{2}}e^{CQ(s)},\label{eta2bound}\end{equation}
and
\begin{equation}|\eta^2(\lambda,\omega,s) - \eta^2_0(\lambda,\omega,s)| \leq Ce^{(\Im \omega)s}\left(\frac{|\omega|s}{1 + |\omega|s}\right)^{-\lambda + \frac{1}{2}}(e^{CQ(s)}-1).\label{eta2minuseta20bound}\end{equation}
Arguments similar to those in the previous section establish smoothness, analyticity, uniqueness, and that $\eta^2$ solves the ODE $\eqref{etaode}$, so we omit the details. We easily obtain the following estimates:
\begin{equation}\left|\frac{d}{ds}\eta^2_0(\lambda,\omega,s)\right| \leq C|\omega| e^{(\Im \omega)s}\left(\frac{|\omega|s}{1 + |\omega|s}\right)^{-\lambda - \frac{1}{2}}\label{deta20bound}\end{equation}
and
\begin{equation}\left|\frac{d}{ds}\eta^2(\lambda,\omega,s) - \frac{d}{ds}\eta^2_0(\lambda,\omega,s)\right| \leq C\left(\frac{|\omega|s}{1 + |\omega|s}\right)^{-\lambda - \frac{1}{2}}e^{(\Im \omega)s}\int_s^{\infty}\left(\frac{|\omega|y}{1 + |\omega|y}\right)^{-\lambda + \frac{1}{2}}e^{CQ(y)}|W(y)|dy.\label{deta2minusdeta20bound}\end{equation}

From $\eqref{eta2bound}$ we see a possible singularity in $\eta^2$ at $\omega = 0$, but this singularity is removable. Indeed, repeating the above construction with the initial function $\eta^{2,0}_0(\lambda,\omega,s) = \omega^{\lambda - \frac{1}{2}}\left(\frac{1}{2}\pi \omega s\right)^{\frac{1}{2}}e^{-\frac{i\pi}{2}\left(\lambda + \frac{1}{2}\right)}H_{\lambda}^{(2)}(\omega s)$ yields a solution $\eta^{2,0}$ of the integral equation
\begin{equation*}\eta^{2,0}(\lambda,\omega,s) = \eta^{2,0}_0(\lambda,\omega,s) + \int_s^{\infty}B(\lambda,\omega,s,y)W(y)\eta^{2,0}(\lambda,\omega,y)dy.\label{eta2intequation}\end{equation*}
This solution satisfies the boundary conditions $\lim_{s\to \infty}\eta^{2,0}(\lambda,\omega,s)e^{i\omega s} = \omega^{\lambda - \frac{1}{2}}$ and it is continuous in $\omega$ up to $\omega = 0$ (from the region $\Im \omega \leq 0$). Finally, $\eta^{2,0}$ can also be obtained from $\eta^2$ in the sense that $\omega^{\lambda - \frac{1}{2}} \eta^2 = \eta^{2,0}$ (by uniqueness).

So we have solved the ODE $\eqref{etaode}$ subject to the boundary conditions $\eqref{eta2bc}$ for $\Im \omega \leq 0$. For $\Im \omega >0$, we obtain a solution of $\eta^2(\lambda,\omega,s)$ of this BVP by defining $\eta^2(\lambda,\omega,s) = \overline{\eta^2(\lambda, \bar{\omega},s)}$. The uniqueness guarantees that this is indeed a solution.

\subsection{Constructing the Resolvent}
We note that for $\Im \omega < 0$, the regularity of $\eta^{1}$ at the origin and the exponential decay of $\eta^2$ as $s\to \infty$ imply that if $\eta^1,\eta^2$ were linearly dependent, then they would produce a nontrivial vector in the kernel of $(\bar{H}_l - \omega)^{-1}$. However, since $\bar{H}_l$ is self-adjoint on a domain in $\mathscr{H}$, the spectrum is real, i.e. $\sigma(\bar{H}_l)\subset \R$ and thus the kernel of $(\bar{H}_l - \omega)^{-1}$ is trivial. Thus, $\eta^1,\eta^2$ must be linearly independent. Since $\eta^1,\eta^2$ solve the same ODE and are linearly independent, we have that the Wronskian $w(\eta^1,\eta^2) \neq 0$ (note also that the Wronskian is easily seen to be independent of $s$). 

What about for $\omega \in \R$? As we noted above, for $\omega \in \R$, $\eta^1$ is real, and more importantly, it has constant phase. However, for $\omega \neq 0$ the boundary conditions $\eqref{eta2bc}$ imply that $\eta^2$ is of variable phase. This implies that $\eta^1,\eta^2$ are linearly independent for real $\omega \neq 0$ and thus that the Wronskian is nonzero for real $\omega \neq 0$. 

For $\omega = 0$, we must argue differently, and we consider the extensions of $\eta^1,\eta^2$ to $\omega = 0$. We recall that, according to the definition $\eqref{defofeta}$, there exist solutions $\gamma^1(\lambda,\omega,u),\gamma^2(\lambda,\omega,u)$ of $\eqref{gammaequation}$ corresponding to $\eta^1(\lambda,\omega,s),\eta^2(\lambda,\omega,s)$, respectively. Let us note also that, using the asymptotics of $u$ described earlier and the definition of $s$, we find 
\begin{equation*}s = -\frac{T(0)}{u}\left(1 + O\left(\frac{1}{u}\right)\right).\end{equation*}
Now let us investigate the asymptotic behavior of $\gamma^1,\gamma^2$. For $\gamma^2$ we have
\begin{align*}1 &= \lim_{s\to \infty} \eta^2(\lambda,\omega,s) \\ &=\lim_{u\nearrow 0} r(u) \gamma^2(\lambda,\omega,u) \\&= \lim_{u\nearrow 0} \left(-\frac{1}{u} + O(1)\right)\gamma^2(\lambda,\omega,u).\end{align*}
This implies that $\gamma^2$ decays as $u\nearrow 0$. For $\gamma^1$ we have
\begin{align*}1 &= \lim_{s\to 0} s^{-\lambda-\frac{1}{2}}\eta^1(\lambda,\omega,s) \\ &= \lim_{u\to -\infty}\left(\frac{-u}{T(0)}\right)^{\lambda + \frac{1}{2}}\left( 1 + O\left(\frac{1}{u}\right)\right)r(u)\gamma^1(\lambda,\omega,u)\\ &= \lim_{u\to -\infty}\left(\frac{-u}{T(0)}\right)^{\lambda + \frac{1}{2}}\left( 1 + O\left(\frac{1}{u}\right)\right)\left(-\frac{1}{u} + O\left(\frac{1}{u^2}\right)\right)\gamma^1(\lambda,\omega,u).\end{align*}
This implies that $\gamma^1$ either decays as $u \to -\infty$ or tends to a constant (depending on $\lambda$). From $\eqref{gammaequation}$ with $\omega = 0$, we see that $\gamma^1$ and $\gamma^2$ are either strictly concave or convex. Thus, $\gamma^1$ and $\gamma^2$ must be linearly independent. In particular, since they solve the same ODE, the Wronskian $w(\gamma^1,\gamma^2) \neq 0$. Furthermore, an easy calculation shows that $w(\eta^1,\eta^2) = w(\gamma^1,\gamma^2)$. We have thus shown that $w(\gamma^1(\lambda,\omega,u),\gamma^2(\lambda,\omega,u))$ is never zero.

Thus, the function $h(\omega,u,v)$ defined by
\begin{equation} h(\omega,u,v) = -\frac{1}{w(\gamma^1(\lambda,\omega,u),\gamma^2(\lambda,\omega,u))}\begin{cases}\gamma^1(\lambda,\omega,u)\gamma^2(\lambda,\omega,v), & u \leq v \\ \gamma^1(\lambda,\omega,v)\gamma^2(\lambda,\omega,u), & v < u\end{cases}\label{defofh}\end{equation}
is well-defined. Note that since we are considering a fixed mode, we omit the functional dependence of $\lambda$ in $h$. Now, it's clear that $h$ is continuous in $u,v$ for fixed $\omega \in \C$, but moreover, $h$ is also continuous in $\omega$ over all of $\C$ for fixed $u,v$. The only possible difficulty comes near $\omega = 0$. But, notice that $h$ is unchanged if we consider $\omega^{\lambda - \frac{1}{2}}\gamma^2$ instead of $\gamma^2$ and the continuity follows. We next claim that $h$ multiplied by the operator in $\eqref{gammaequation}$ ``acts like the Dirac functional''. More precisely,
\begin{prop}The function $h(\omega,u,v)$ defined in $\eqref{defofh}$ satisfies
\begin{equation}\int_{-\infty}^0 h(\omega,u,v) \left(-\D_v^2 - r^4(v)\omega^2 + \frac{r^2}{T^2}l(l+1)\right)\gamma(v)dv = \gamma(u)\label{hintequation}\end{equation}
for any $\gamma \in C_0^{\infty}(-\infty,0)$.\end{prop}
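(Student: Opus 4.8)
The plan is to recognize that $h(\omega,u,v)$, viewed as a function of $v$ for fixed $u$, is precisely the Green's function for the formally self-adjoint Sturm--Liouville operator
\[L := -\D_v^2 - r^4(v)\omega^2 + \frac{r^2}{T^2}l(l+1)\]
appearing in $\eqref{gammaequation}$, and then to verify the reproducing identity $\eqref{hintequation}$ by integration by parts. The essential structural facts are already in place: $\gamma^1$ and $\gamma^2$ both solve $L\gamma = 0$ (this is exactly $\eqref{gammaequation}$, which is equivalent to $\eqref{etaode}$ under the change of variables $\eqref{defofs}$, $\eqref{defofeta}$); since $L$ carries no first-order term, $w(\gamma^1,\gamma^2)$ is constant in $u$; and we have just shown that this Wronskian never vanishes. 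Hence $h$ is well-defined, continuous across the diagonal, and piecewise annihilated by $L$ in $v$.

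First I would split the integral at $v = u$. On $(-\infty,u)$ the kernel equals $-w^{-1}\gamma^1(v)\gamma^2(u)$, a constant multiple of $\gamma^1(v)$ and hence annihilated by $L$; on $(u,0)$ it equals $-w^{-1}\gamma^1(u)\gamma^2(v)$, a multiple of $\gamma^2(v)$, again annihilated by $L$. Integrating by parts twice on each subinterval transfers both derivatives onto $h$, and since $Lh = 0$ on each piece the interior integrals drop out, leaving only boundary terms:
\[\int_{-\infty}^0 h\,L\gamma\,dv = \bigl[\D_v h\,\gamma - h\,\D_v\gamma\bigr]_{-\infty}^{u^-} + \bigl[\D_v h\,\gamma - h\,\D_v\gamma\bigr]_{u^+}^{0}.\]
Because $\gamma \in C_0^\infty(-\infty,0)$, all contributions at $v = -\infty$ and $v = 0$ vanish, so only the two evaluations at $v = u$ survive.

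At $v = u$ the kernel $h$ is continuous (both branches reduce to $-w^{-1}\gamma^1(u)\gamma^2(u)$), which cancels the $\D_v\gamma(u)$ terms, and what remains is $\bigl(\D_v h(u^-) - \D_v h(u^+)\bigr)\gamma(u)$. Computing this jump in the $v$-derivative gives
\[\D_v h(u^-) - \D_v h(u^+) = -\frac{(\gamma^1)'(u)\gamma^2(u) - \gamma^1(u)(\gamma^2)'(u)}{w(\gamma^1,\gamma^2)} = -\frac{-w(\gamma^1,\gamma^2)}{w(\gamma^1,\gamma^2)} = 1,\]
so the right-hand side collapses to $\gamma(u)$, which is $\eqref{hintequation}$.

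The computation is routine, so the only points needing care are the bookkeeping of this jump condition --- confirming the continuity of $h$ and the exact normalization of the derivative jump to $1$ through the Wronskian --- and verifying that nothing degenerates at $\omega = 0$. For the latter I would invoke the observation made just before the proposition that $h$ is unchanged if $\gamma^2$ is replaced by $\omega^{\lambda - \frac{1}{2}}\gamma^2$; this guarantees that the denominator is the nonvanishing, continuous quantity $w(\gamma^1,\gamma^2)$ for every $\omega \in \C$, including $\omega = 0$, so that the identity holds without exception.
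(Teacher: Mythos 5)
Your proposal is correct and takes essentially the same route as the paper's proof, which likewise splits the integral at $v=u$ and integrates by parts on each piece (using that $h$ is smooth in $v$ away from the diagonal and annihilated there by the operator). You have simply made explicit the boundary-term bookkeeping and the Wronskian normalization of the derivative jump that the paper leaves as "a simple calculation."
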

\begin{proof}This follows from a simple calculation where we split the integral into $\int_{-\infty}^u$ and $\int_u^0$ and integrate by parts in these regions (since $h$ is smooth in $v$ in these regions).\end{proof}

Let us now compute the resolvent. To this end, we define integral operator $S_{\omega}$ acting on the domain $\mathscr{D}(S_{\omega}) = \left\{(\bar{H}_l - \omega)\Gamma : \Gamma \in \mathscr{C}_l^2\right\}$, with $S_{\omega}\Phi$ being given by
\begin{equation}(S_{\omega}\Phi)(u) = \int_{-\infty}^0\left[\delta(u-v)\left(\begin{array}{cc}0 & 0 \\ 1 & 0\end{array}\right) + r^4(v)h(\omega,u,v)\left(\begin{array}{cc}\omega & 1 \\ \omega^2 & \omega \end{array}\right)\right]\Phi(v)dv.\label{defofS}\end{equation}
We next claim that, in fact, $S_{\omega} = (\bar{H}_l - \omega)^{-1}$ on $\mathscr{H}$. To see this, first note that $\mathscr{D}(S_{\omega})$ is dense in $\mathscr{H}$. This can be argued easily, and indeed, the argument is identical to the one presented in \cite{thecauchyproblemforthewaveequationintheschwarzschildgeometry}. However, we can go further; we claim that, in fact, the set $\left\{ (\bar{H}_l - \omega)\Gamma : \Gamma \in C_0^{\infty}(-\infty,0)^2\right\}$ is dense in $\mathscr{D}(S_{\omega})$ and thus dense in $\mathscr{H}$. To prove it, we must show that for each $\Phi \in \mathscr{C}^2$ there exists a sequence $\Phi_n \in C_0^{\infty}(-\infty,0)^2$ such that $(H_l - \omega)(\Phi_n - \Phi) \to 0$ as $n \to \infty$ (in the norm $\|\cdot\|$ induced by the inner product $\langle \cdot, \cdot \rangle_l$). Recalling the specific form of $H_l$ in $\eqref{defofHl}$, we compute for $\Phi = (\phi_1,\phi_2)^T \in \mathscr{C}^2$,
\begin{align*}&\|\left(\bar{H}_l - \omega\right)\Phi\|^2 \\&= \|(H_l - \omega)\Phi\|^2 \\ &= \int_{-\infty}^0r^4|A_l \phi_1 - \omega\phi_2|^2 + |\D_u\phi_2 - \omega\D_u\phi_1|^2 + \frac{r^2}{T^2}l(l+1)|\phi_2 - \omega\phi_1|^2du \\ &\leq 2\int_{-\infty}^0r^4|A_l\phi_1|^2 + r^4|\omega|^2|\phi_2|^2 + |\D_u\phi_2|^2 + |\omega|^2|\D_u\phi_1|^2 + \frac{r^2}{T^2}l(l+1)|\phi_2|^2 +\frac{r^2}{T^2}l(l+1)|\omega|^2|\phi_1|^2du.\end{align*}
Now, if we define $\psi_i(r) = \phi_i(u(r))$, then we get $\D_r \psi_i(r)\frac{r^2}{KT} = \D_u \phi_i(u(r))$, owing to the definition of $u(r)$ in $\eqref{ucoordinate}$. We also find $\D_u^2\phi_i(u(r)) = \frac{r^2}{KT}\D_r\left(\D_r\psi_i(r)\frac{r^2}{KT}\right)$. Now plugging in the specific form of $A_l$ and changing from the $u$ variable to the $r$ variable, we find
\begin{align*}\|(\bar{H}_l - \omega)\|^2 \leq 2\int_0^{\infty}\frac{2}{KTr^2}&\left|\D_r \left(\D_r \psi_1\frac{r^2}{KT}\right)\right|^2 + \frac{2K(l(l+1))^2}{T^3r^2}|\psi_1|^2 + KTr^2|\omega|^2|\psi_2|^2 + \frac{r^2}{KT}|\D_r\psi_2|^2 \\&+ |\omega|^2\frac{r^2}{KT}|\D_r\psi_1|^2 + \frac{Kl(l+1)}{T}|\psi_2|^2 + \frac{K}{T}l(l+1)|\omega|^2|\psi_1|^2 dr\end{align*}
The first two terms in the above integrand are troublesome. Let us look first at the second term for $l\neq 0$. First we recall that $\phi_1 \in \mathscr{C}_l$ implies that $\psi_1$ vanishes outside of a large ball (say of radius $R$). Thus, for some $r_0 >0$, we have
\begin{align*}\int_0^{\infty}\frac{|\psi_1|^2}{r^2}dr &= \int_0^{R}\frac{|\psi_1|^2}{r^2}dr \\ &= \int_0^{r_0}\frac{|\psi_1|^2}{r^2}dr + \int_{r_0}^R\frac{|\psi_1|^2}{r^2}dr \\
&\leq \frac{1}{r_0^2}\int_0^{\infty}|\psi_1|^2dr + C \int_0^{r_0}|\psi_1| dr, \text{ since } \psi_1 = O(r^2) \text{ near } r = 0 \\
&\leq \frac{1}{r_0}^2\int_0^{\infty}|\psi_1|^2 dr + 2Cr_0^2\int_0^{r_0}|\psi_1|^2 dr, \text{ using H\"{o}lder}\\
&\leq C\int_0^{\infty}|\psi_1|^2dr.\end{align*}
Thus, this term is actually bounded. To see that the first term is no problem, we simply observe that the absolute value will have at least an $r$ inside it, and this will come out and cancel the $r^2$ in the denominator. This, using that $T,K$ are bounded and bounded away from zero, as well as the fact that $\psi_1,\psi_2$ vanish outside of a large ball, we can bound the above by the $H^2$ norms of $\psi_1,\psi_2$. More precisely, 
\begin{equation*}\|(H_l - \omega)\Phi\|^2 \leq C(1 + |\omega|^2)\|\Psi\|_{H^2(0,\infty)^2},\end{equation*}
where $C$ depends on $l$ and the support of $\Psi$. Now, since $C_0^{\infty}(0,\infty)^2$ is dense in $H^2(0,\infty)^2$, this shows that for $(\Psi_n)\subset C_0^{\infty}(0,\infty)^2$ with $\|\Psi - \Psi_n\| \to 0$ as $n \to \infty$, we can find a sequence $(\Phi_n) = (\Psi(r(u))_n)$ so that $(\Phi_n)\subset C_0^{\infty}(-\infty,0)^2$ and $\|(H_l - \omega)(\Phi - \Phi_n)\| \to 0$ as $n \to \infty$. Therefore, given $\Phi\in \mathscr{C}_l^2$, we consider $\Psi(r) := \Phi(u(r))$. Then $\Psi$ is surely in $H^2(0,\infty)^2$, and we may therefore find a sequence $(\Psi_n)\subset C_0^{\infty}(0,\infty)^2$ so that $\Psi_n \to \Psi$ as $n \to \infty$. Then define a sequence $(\Phi_n) \subset C_0^{\infty}(-\infty,0)$ by $\Phi_n(u) = \Psi_n(r(u))$. By the above, we know that $(H_l - \omega)(\Phi - \Phi_n) \to 0$ as $n \to \infty$. This then proves our claim; i.e. $\left\{(\bar{H}_l - \omega)\Gamma : \Gamma \in C_0^{\infty}(-\infty,0)^2\right\}$ is dense in the set $\left\{(\bar{H}_l - \omega)\Gamma : \Gamma \in \mathscr{C}_l^2\right\}$.

Now using equation $\eqref{hintequation}$, it is easy to check that for $\Psi \in C_0^{\infty}(-\infty,0)$, we have $(S_{\omega}(\bar{H}_l - \omega)\Psi)(u) = \Psi(u)$. In other words, $S_{\omega} = (\bar{H}_l - \omega)^{-1}$ on $\left\{ (\bar{H}_l - \omega)\Gamma : \Gamma \in C_0^{\infty}(-\infty,0)^2\right\}$. But since the resolvent is a bounded operator and $\left\{ (\bar{H}_l - \omega)\Gamma : \Gamma \in C_0^{\infty}(-\infty,0)^2\right\}$ is dense in $\mathscr{D}(S_{\omega})$, which is dense in $\mathscr{H}$, it follows that $S_{\omega} = (\bar{H}_l - \omega)^{-1}$ on $\mathscr{H}$. 

Now, according to Stone's formula (Theorem~\ref{stoneformula}), if we let $k(\omega,u,v)$ denote the kernel of the operator $S_{\omega}$, then for any $\Psi \in \mathscr{H}$ we have
\begin{equation*}\frac{1}{2}\left(P_{[a,b]} + P_{(a,b)}\right) \Psi(u) = \lim_{\eps\searrow 0}\frac{1}{2\pi i}\int_a^b \int_{-\infty}^0(k(\omega + i\eps,u,v) - k(\omega - i\eps,u,v))\Psi(v)dvd\omega.\end{equation*}
Recalling that $\eta^i(\lambda,\bar{\omega},s) = \overline{\eta^i(\lambda,\omega,s)}$ and noting that the same must therefore hold for the $\gamma^i$, this implies that $h(\omega + i\eps,u,v) = \overline{h(\omega - i\eps,u,v)}$ for $\omega \in \R$, and thus $k(\omega + i\eps,u,v) =\overline{k(\omega - i\eps,u,v)}$. This gives
\begin{equation}\frac{1}{2}\left(P_{[a,b]} + P_{(a,b)}\right) \Psi(u) = \lim_{\eps\searrow}-\frac{1}{\pi}\int_a^b\int_{-\infty}^0\text{Im}(k(\omega - i\eps,u,v))\Psi(v)dv\label{spectralrep1}\end{equation}
and we note again that this converges in the $\mathscr{H}$-norm. In particular, we would like to derive a spectral representation for the data $\Psi_0^{lm}$. We would like to consider the representation in $\eqref{spectralrep1}$ and interchange the limit and the integral, so we must analyze $\text{Im}(k(\omega - i\eps,u,v))$. Indeed, we know that by the above, at the worst $h(\omega - i\eps,u,v)$ tends to a constant at $u = -\infty$ (this follows from the discussion above on $\gamma^1$). But there is a factor of $r^4$ in $\text{Im}(k)$ to enforce decay. Indeed, as $u \to -\infty$, $r^4 = O\left(\frac{1}{u^4}\right)$. Since $\Psi^{lm}_0$ tends to a constant at $u = -\infty$, we see that we are justified in switching the order of the limit and the integration (also using, of course, the continuity of $\Im k$), for fixed $u$. From the norm convergence implied in Stone's formula, we thus obtain the spectral representation of $\Psi^{lm}_0$:
\begin{equation}\frac{1}{2}\left(P_{[a,b]} + P_{(a,b)}\right) \Psi^{lm}_0(u) = -\frac{1}{\pi}\int_a^b\int_{-\infty}^0 \text{Im}(k(\omega,u,v))\Psi^{lm}_0(v)dvd\omega.\label{specrep2}\end{equation}
This yields that $P_{\{a\}} = 0$ for any $a\in \R$ and that the spectrum of $\bar{H}_l$ is absolutely continuous. Thus we have
\begin{equation}P_{(a,b)}\Psi_0^{lm}(u) = -\frac{1}{\pi}\int_a^b\int_{-\infty}^0 \text{Im}(k(\omega,u,v))\Psi^{lm}_0(v)dvd\omega.\label{specrep3}\end{equation}
Finally, using the spectral theorem and the fact that $e^{-it\bar{H}_l}$ is unitary, we derive the representation for $\Psi^{lm}(t,u)$:
\begin{equation}\Psi^{lm}(t,u) = -\frac{1}{\pi}\int_{\R}e^{-i\omega t}\int_{-\infty}^0 \text{Im}(k(\omega,u,v))\Psi^{lm}_0(v)dvd\omega.\label{repform}\end{equation}

\section{Decay}To show that the solution $\Psi^{lm}$ decays, we would like to use use the Riemann-Lebesgue lemma and the representation formula $\eqref{repform}$. In particular, if we show that the integrand within the $\omega$-integral is in $L^1(\R,\C^2)$, then the Riemann-Lebesgue lemma guarantees that for fixed $u \in (-\infty,u)$, $\Psi^{lm}(t,u) \to 0$ as $t \to \infty$. To this end, let us find a more useful form of the integrand. First, we claim that the pair $\{\eta^2,\overline{\eta^2}\}$ forms a fundamental set for the ODE $\eqref{etaode}$ for $\omega \in \R\setminus\{0\}$. To verify this, we first compute the Wronskian $w(\eta^2_0,\overline{\eta^2_0})$. An easy calculation shows that $w(\eta^2_0,\overline{\eta^2_0}) = 2i\neq 0$. Now, $\eqref{eta2minuseta20bound}$ implies that the difference between $\eta^2$ and $\eta^2_0$ tends to zero as $s\to\infty$, which means (invoking $\eqref{deta2minusdeta20bound}$ as well) that we must have 
\begin{equation}w(\eta^2,\overline{\eta^2}) = w(\eta^2_0, \overline{\eta^2_0})= 2i\neq 0,\label{computewronskian}\end{equation}
since $w(\eta^2,\overline{\eta^2})$ is constant in $s$. Thus, the pair $\{\eta^2,\overline{\eta^2}\}$ forms a fundamental set for $\omega \in \R\setminus\{0\}$. This implies that $\{\gamma^2,\overline{\gamma^2}\}$ forms a fundamental set for $\eqref{gammaequation}$ for $\omega \in \R\setminus\{0\}$. Thus, there exist numbers (depending only on $\omega$) $c(\omega),d(\omega)$ such that 
\begin{equation}\gamma^1(\lambda,\omega,u) = c(\omega)\gamma^2(\lambda,\omega,u) + d(\omega)\overline{\gamma^2(\lambda,\omega,u)},\label{gamma1bygamma2}\end{equation}
and where we know that $d(\omega) \neq 0$ for all $\omega$. Note then that $\eqref{computewronskian}$ implies that $w(\gamma^1,\gamma^2) = -2id(\omega)$ and $w(\gamma^1,\overline{\gamma^2}) = 2ic(\omega)$.

Next, we let $\phi^1_{\omega} = \Re \gamma^2, \phi^2_{\omega} = \Im \gamma^2$, and $\Phi^a_{\omega} = (\phi^a_{\omega},\omega\phi^a_{\omega})^T$ (note that we are dropping the $\lambda$ argument, since for our purposes it is superfluous, and we denote the $\omega$ dependence by a subscript). A short calculation then shows 
\begin{equation}\Im h_{\omega}(u,v) = \frac{1}{2\omega}\sum_{a,b = 1}^2\alpha_{ab}\phi^a_{\omega}(u)\phi^b_{\omega}(v)\label{simplifiedh}\end{equation}
where
\begin{equation}\alpha_{11} = 1 + \text{Re}\left(\frac{c}{d}\right), \alpha_{22} =  1 - \text{Re}\left(\frac{c}{d}\right), \alpha_{12} = \alpha_{21} = -\text{Im}\left(\frac{c}{d}\right).\label{alphadef}\end{equation}
Now if we write $\Psi^{lm}_0 = (\psi^{lm}_{0,1}, \psi^{lm}_{0,2})^T$, we have
\begin{align*}\int_{-\infty}^0 \text{Im}(k_{\omega}(u,v))\Psi^{lm}(v)dv &= \frac{1}{2\omega^2}\sum_{a,b = 1}^2 \alpha_{ab}\Phi^a_{\omega}(u)\int_{-\infty}^0 r^4 \phi^b_{\omega}(v)(\omega^2 \psi^{lm}_{0,1} + \omega \psi^{lm}_{2,0})dv.\end{align*}
But now let us use the fact that $\phi^b(v)$ satisfies $-\D_v^2 \phi^b_{\omega} + \frac{r^2}{T^2}l(l+1)\phi^b_{\omega} =  \omega^2 r^4\phi^b_{\omega}$. We plug this in the above integral to obtain
\begin{align*}\int_{-\infty}^0 \text{Im}(k_{\omega}(u,v))\Psi^{lm}_0(v)dv &= \frac{1}{2\omega^2}\sum_{a,b = 1}^2\alpha_{ab}\Phi^a_{\omega}(u)\int_{-\infty}^0 \psi^{lm}_{0,1}\left(-\D_v^2 + \frac{r^2}{T^2}l(l+1)\right)\phi^b_{\omega} + \omega r^4 \psi^{lm}_{0,2}\phi^b_{\omega}dv.\end{align*}
We now introduce the additional assumption that $\Psi^{lm}_0 \in C_0^{\infty}(-\infty,0)^2$.\footnote{This corresponds to assuming that the data in problem $\eqref{cp2}$ is supported away from the origin. But note that, to work in the $u$ coordinate as we have done, which maps the interval $(0,\infty)$ to $(-\infty,0)$, with $r = 0$ corresponding to $u = -\infty$, this does not seem like an unreasonable requirement.} Owing to this assumption, we may integrate by parts in the above integral and obtain
\begin{align}\int_{-\infty}^0 \text{Im}(h_{\omega}(u,v))\Psi^{lm}_0(v)dv &= \frac{1}{2\omega^2}\sum_{a,b = 1}^2\Phi^a_{\omega}(u) \int_{-\infty}^0 (\D_v \psi^{lm}_{0,1})(\D_v\phi^b_{\omega}) + \frac{r^2}{T^2}l(l+1)\phi^b_{\omega}\psi^{lm}_{0,1} + \omega r^4 \psi^{lm}_{0,2}\phi^b_{\omega}dv\notag \\
&= \frac{1}{2\omega^2}\sum_{a,b = 1}^2\alpha_{ab}\Phi^a_{\omega}(u)\langle \Psi^{lm}_0, \Phi^b_{\omega}\rangle_l,\end{align}
and so
\begin{equation}\Psi^{lm}(t,u) = \frac{1}{2\pi} \int_{\R} e^{-i\omega t} \frac{1}{\omega^2}\sum_{a,b = 1}^2\alpha_{ab}(\omega)\Phi^a_{\omega}(u)\langle \Psi^{lm}_0, \Phi^b_{\omega}\rangle_ld\omega.\label{repform2}\end{equation}
We have already demonstrated that the integrand above is continuous in $\omega$, so to show the integrand is in $L^1(\R,\C^2)$, we need only to analyze it for $|\omega|\gg 1$. First we recall the formulas for $c(\omega),d(\omega)$: $w(\gamma^1,\gamma^2) = -2id(\omega), w(\gamma^1,\overline{\gamma^2}) = 2ic(\omega)$. Let us fix $s = s_0 \in (0,\infty)$ and we will compute $w(\eta^1,\eta^2)(s)$. Indeed, recalling the bounds $\eqref{deta2minusdeta20bound},\eqref{eta2minuseta20bound}, \eqref{deta1minusdeta10bound}, \eqref{eta1minuseta10bound}$ and considering $\omega \in \R$, we have

\begin{equation}\left|\eta^1(\lambda,\omega,s) - \eta^1_0(\lambda,\omega,s) \right| = O\left(\frac{1}{|\omega|^{\lambda + \frac{3}{2}}}\right),\end{equation}
\begin{equation}\left| \frac{d}{ds}\eta^1(\lambda,\omega,s) - \frac{d}{ds}\eta^1_0(\lambda,\omega,s)\right| = O\left(\frac{1}{|\omega|^{\lambda + \frac{1}{2}}}\right),\end{equation}
\begin{equation}\left| \eta^2\lambda,\omega,s) - \eta^2_0(\lambda,\omega,s) \right| = O\left(\frac{1}{|\omega|}\right),\end{equation}
and 
\begin{equation}\left| \frac{d}{ds}\eta^2(\lambda,\omega,s) - \frac{d}{ds}\eta^2_0(\lambda,\omega,s) \right| = O(1).\end{equation}
Thus we have $w(\eta^1,\eta^2)  = w(\eta^1_0,\eta^2_0) + O\left(\frac{1}{|\omega|^{\lambda + \frac{1}{2}}}\right)$. Then an easy calculation shows that $w(\eta^1_0, \eta^2_0) = O\left(\frac{1}{|\omega|^{\lambda + \frac{1}{2}}}\right)$, which implies that $w(\eta^1,\eta^2) = O\left(\frac{1}{|\omega|^{\lambda + \frac{1}{2}}}\right)$. We can show similarly that $w(\eta^1,\overline{\eta^2}) = O\left(\frac{1}{|\omega|^{\lambda + \frac{1}{2}}}\right)$. Thus, we have that $c,d = O\left(\frac{1}{|\omega|^{\lambda + \frac{1}{2}}}\right)$. This implies that
\begin{equation} |\alpha_{ab}| \leq 1 + O\left(|\omega|^{\lambda + \frac{1}{2}}\right).\end{equation}

Next, we note that $|\phi^a_{\omega}| \leq |\gamma^2|$. But from the bound $\eqref{eta2bound}$, considered for $\omega \in \R$, we have that (considering $s = s(u)$) $\gamma^2(\lambda,s,u) = O(1)$ for large $\omega$. Finally, we look at the term $\langle \Psi^{lm}_0,\Phi^b_{\omega}\rangle_l$. We have
\begin{align*}\langle\Psi^{lm}_0,\Phi^b_{\omega}\rangle_l &= \int_{-\infty}^0 (\D_v\psi^{lm}_{0,1})(\D_v\phi^b_{\omega}) + \frac{r^2}{T^2}l(l+1)\gamma^b_{\omega} \psi^{lm}_{0,1} + \omega r^4 \psi^{lm}_{0,2}\phi^b_{\omega} dv \\
&= \int_{-\infty}^0 (-\D_v^2 \psi^{lm}_{1,0} + \omega r^4 \psi^2_{lm} + \frac{r^2}{T^2}l(l+1) \psi^{lm}_{1,0})\phi^b_{\omega}dv \\
&= \frac{1}{\omega^2}\int_{-\infty}^0 (-\D_v^2 \psi^{lm}_{1,0} + \omega r^4 \psi^2_{lm} + \frac{r^2}{T^2}l(l+1) \psi^{lm}_{1,0})\left(-\D_v^2 \phi^b_{\omega} + \frac{r^2}{T^2} + \frac{r^2}{T^2}l(l+1)\phi^b_{\omega}\right)\frac{1}{r^4}dv,\end{align*}
where we used that $-\D_v^2 \phi^b_{\omega} + \frac{r^2}{T^2}l(l+1)\phi^b_{\omega} =  \omega^2 r^4\phi^b_{\omega}$. Integrating by parts and iterating this argument as many times as we please, we obtain \emph{arbitrary} polynomial decay in $\omega$. This polynomial decay is enough to then guarantee that the integrand in $\eqref{repform2}$ is in $L^1(\R,\C^2)$, and then by the Riemann-Lebesgue lemma, we are assured that $\Psi^{lm}(t,u) \to 0$ for fixed $u$ as $t \to \infty$. That the modal decay implies decay of the full solution $\Psi$ follows exactly as in \cite{thecauchyproblemforthewaveequationintheschwarzschildgeometry}. Translating this back into the $r$-coordinate, this implies that for fixed $r \in [0,\infty)$, the solution $\zeta$ of $\eqref{cp2}$, under the additional requirement that $Z_0 \in C_0^{\infty}(R^3\setminus\{0\})$ decays as $t\to \infty$. Thus we have the following theorem:
\begin{thm}Consider problem $\eqref{cp2}$ in a particle-like geometry. If the data is smooth and compactly supported away from the origin, then the solution decays in $L^{\infty}_{\text{loc}}$ as $t\to\infty$.\end{thm}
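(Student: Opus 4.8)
The plan is to read off the decay directly from the spectral representation \eqref{repform2}, which writes each spherical mode as the Fourier-type integral
$\Psi^{lm}(t,u) = \frac{1}{2\pi}\int_{\R} e^{-i\omega t}\,F_{lm}(\omega,u)\,d\omega$, where $F_{lm}(\omega,u) = \omega^{-2}\sum_{a,b}\alpha_{ab}(\omega)\,\Phi^a_{\omega}(u)\,\langle \Psi^{lm}_0,\Phi^b_{\omega}\rangle_l$. The entire argument then reduces to a single analytic fact: for each fixed $u$ and each fixed mode $(l,m)$, the map $\omega\mapsto F_{lm}(\omega,u)$ lies in $L^1(\R,\C^2)$. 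Granting this, the Riemann--Lebesgue lemma immediately yields $\Psi^{lm}(t,u)\to 0$ as $t\to\infty$, and the remaining task is to sum over modes and translate back to $\zeta$.

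To establish the $L^1$ bound I would split the $\omega$-axis into a bounded neighborhood of the origin and its complement. Near $\omega=0$ the integrand is controlled by the continuity established earlier: $\eta^1$ and $\eta^2$ (hence $\gamma^1,\gamma^2$ and the $\phi^a_{\omega}=\Re\gamma^2,\ \Im\gamma^2$) extend continuously through $\omega=0$ via the rescaled functions $\eta^{1,0},\eta^{2,0}$, and the apparent $\omega^{-2}$ singularity is harmless because the overlap $\langle\Psi^{lm}_0,\Phi^b_{\omega}\rangle_l$ carries compensating powers of $\omega$; thus the integrand is bounded on compact $\omega$-sets and contributes a finite integral. The genuine work lies in the tail $|\omega|\gg 1$, where three competing factors must be balanced against one another: the coefficients $\alpha_{ab}(\omega)$ built from the Wronskian ratio $c/d$, the eigenfunction vector $\Phi^a_{\omega}(u)$ (whose second slot carries an extra factor $\omega$), and the overlap integral.

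The decisive estimate is the large-$\omega$ behavior of $\langle\Psi^{lm}_0,\Phi^b_{\omega}\rangle_l$. Here I would exploit that $\phi^b_{\omega}$ solves $-\D_v^2\phi^b_{\omega} + \frac{r^2}{T^2}l(l+1)\phi^b_{\omega} = \omega^2 r^4 \phi^b_{\omega}$, so that each factor $\omega^{-2}$ may be traded for one application of the operator $\frac{1}{r^4}\bigl(-\D_v^2 + \frac{r^2}{T^2}l(l+1)\bigr)$. Because the data $\Psi^{lm}_0$ is smooth and compactly supported \emph{away from the origin}, no boundary terms arise at $u=-\infty$ or $u=0$, so integration by parts transfers this operator onto $\Psi^{lm}_0$ and may be iterated arbitrarily many times. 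This produces decay of $\langle\Psi^{lm}_0,\Phi^b_{\omega}\rangle_l$ faster than any fixed polynomial in $\omega$, whereas the bound \eqref{eta2bound} (which gives $\phi^a_{\omega}=O(1)$ for large real $\omega$, hence $\Phi^a_{\omega}=O(|\omega|)$) together with the Wronskian asymptotics for $c,d$ contribute only fixed polynomial growth. Thus the tail integrand is dominated by an integrable function, completing the $L^1$ bound and the per-mode decay.

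Finally, to pass from modal decay to decay of the full solution $\Psi$ in $L^{\infty}_{\text{loc}}$, I would invoke the summation argument of \cite{thecauchyproblemforthewaveequationintheschwarzschildgeometry}: smoothness and compact support of $Z_0$ force the coefficients $\Psi^{lm}_0$ in \eqref{sphericalharmonicdecomp} to decay rapidly in $l$, and the arbitrary polynomial decay obtained above lets one sum the pointwise bounds over $l,m$ and interchange summation with the $t\to\infty$ limit. Translating back through $\psi(t,u,\theta,\phi)=\zeta(t,r(u),\theta,\phi)$ and the change of variable $u=u(r)$ then yields local uniform decay of $\zeta$ in $r$. The main obstacle I anticipate is exactly this last uniformity: the constants in the large-$\omega$ estimates depend on $\lambda=l+\tfrac12$ through the Green's-function bound \eqref{Gbound}, the series bound \eqref{eta1nbound}, and the coefficients $\alpha_{ab}$, and one must verify that their growth in $l$ is outpaced by the rapid decay of $\Psi^{lm}_0$, so that the mode sum genuinely converges and the limit may legitimately be taken term by term.
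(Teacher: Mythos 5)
Your proposal follows essentially the same route as the paper: the representation formula \eqref{repform2}, the trade of $\omega^{-2}$ for the operator $\frac{1}{r^4}\bigl(-\D_v^2 + \frac{r^2}{T^2}l(l+1)\bigr)$ via repeated integration by parts (using that the data is supported away from the origin) to obtain arbitrary polynomial decay of $\langle\Psi^{lm}_0,\Phi^b_{\omega}\rangle_l$, the Riemann--Lebesgue lemma for per-mode decay, and the mode-summation argument of \cite{thecauchyproblemforthewaveequationintheschwarzschildgeometry}. Your closing remark about tracking the $l$-dependence of the constants in the large-$\omega$ estimates is a fair caveat, but it is not a departure from the paper, which handles the mode sum by the same citation.
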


\section{Application to Particle-like Solutions of EYM}
Finally, we note that particle-like solutions of the SU(2) EYM equations satisfy the conditions $\eqref{firstcondn} - \eqref{TKasymptotics}$, c.f. \cite{existenceofinifinitelymanysmoothstaticglobalsolutionsoftheeinsteinyangmillsequations}. The behavior at the origin follows by simple Taylor expansions and the far-field behavior follows from the results in \cite{existenceofinifinitelymanysmoothstaticglobalsolutionsoftheeinsteinyangmillsequations} (with $K^2 = A^{-1}$ and an asymptotic expansion of the metric coefficients at infinity. Thus, solutions of the wave equation in SU(2) EYM particlelike geometry, with data that is smooth and compactly supported away from the origin, decay in $^{\infty}_{\text{loc}}$ as $t\to \infty$.

\section{Acknowledgements}We thank Volker Elling for his helpful comments in discovering the right boundary conditions to impose at $r = 0$, and the author would like to thank especially his advisor Joel Smoller for introducing this problem, for his many helpful discussions, and for his financial support, NSF contract no. DMS-0603754.
\bibliographystyle{plain}
\bibliography{mybibliography}
\end{document}